\newtheorem{Thm}{Theorem}
\newtheorem{Remark}{Remark}
\DeclareMathOperator{\Tr}{\mathrm{Tr}}
\DeclareMathOperator{\zero}{\mathbf{0}}
\DeclareMathOperator{\Rank}{\mathrm{Rank}}
\DeclareMathOperator{\diag}{\mathrm{diag}}
\DeclareMathOperator{\maxo}{maximize}
\DeclareMathOperator{\mino}{minimize}
\newcommand{\qed}{\hfill \ensuremath{\blacksquare}}
\newcommand{\abs}[1]{\lvert#1\rvert}
\date{\currenttime ,\,\today}
\author{ Derrick Wing Kwan Ng, Marco Breiling, Christian Rohde, Frank Burkhardt, and Robert Schober\thanks{Derrick Wing Kwan Ng and Robert Schober are  with the Institute for Digital Communications, Friedrich-Alexander-University Erlangen-N\"urnberg (FAU), Germany (email:\{kwan, schober\}@lnt.de). Marco Breiling, Christian Rohde, and Frank Burkhardt are with Fraunhofer Institute for Integrated Circuits (IIS), Germany (email:\{marco.breiling, christian.rohde, frank.burkhardt\}@iis.fraunhofer.de).  This paper has been  presented  in part at  the $81$-st IEEE Vehicular Technology Conference, Glasgow, Scotland,  May 2015 \cite{CN:VTC2015}.
\vspace*{-2mm}
}}
\title{Energy-Efficient 5G Outdoor-to-Indoor Communication: SUDAS Over Licensed and Unlicensed Spectrum\vspace*{-2mm}}
\begin{document}
\maketitle\vspace*{-18mm}
\begin{abstract}\vspace*{-2mm}
In this paper, we study the joint resource allocation algorithm design  for  downlink and uplink
multicarrier transmission assisted by a shared user equipment (UE)-side distributed antenna system (SUDAS). The proposed SUDAS simultaneously
 utilizes  licensed frequency bands and unlicensed frequency bands, (e.g. millimeter wave bands),  to enable a spatial multiplexing gain for single-antenna UEs to improve  energy efficiency and system throughput of $5$-th generation (5G) outdoor-to-indoor communication.  The  design of the UE selection, the time allocation to uplink and downlink,  and the transceiver processing matrix is formulated as a non-convex optimization problem for the maximization of  the end-to-end system energy efficiency  (bits/Joule).  The proposed problem formulation takes into account  minimum data rate requirements for delay sensitive UEs and the circuit power consumption of all transceivers. In order to design a tractable resource allocation algorithm, we first show that the optimal transmitter precoding and receiver post-processing matrices  jointly diagonalize the end-to-end communication channel for both downlink and uplink communication via SUDAS. Subsequently,  the matrix optimization problem is converted to an equivalent scalar optimization problem for multiple parallel channels, which is solved by an asymptotically globally optimal iterative algorithm. Besides, we propose  a suboptimal algorithm which finds a locally optimal solution of the non-convex optimization problem.
 Simulation results illustrate that the proposed resource allocation algorithms for SUDAS
 achieve a significant performance gain  in terms of system energy efficiency and spectral efficiency compared to conventional baseline systems  by offering multiple parallel data streams for single-antenna UEs.  In fact, the proposed SUDAS
is able to bridge the gap between the current technology
and the high data rate and energy efficiency requirements of 5G outdoor-to-indoor
communication systems.
\end{abstract}
\begin{keywords}\vspace*{-4mm} 5G  outdoor-to-indoor communication, OFDMA resource allocation, non-convex optimization.
\end{keywords}

\section{Introduction}
\label{sect1}
High data rate,  high energy efficiency, and ubiquity are basic requirements for $5$-th generation (5G)  wireless communication systems.    A relevant technique for improving the system throughput for given quality-of-service
(QoS) requirements is    multiple-input multiple-output (MIMO) \cite{CN:VTC2015}--\nocite{JR:Massive_MIMO_mag}\cite{JR:TWC_large_antennas}, as it provides extra degrees of freedom in the spatial domain which facilitates a trade-off between multiplexing gain and diversity gain. In particular, massive MIMO, which equips the transmitter with a very large number of antennas  to serve a comparatively small number of user equipments (UEs), has received considerable interest recently \cite{JR:Massive_MIMO_mag,JR:TWC_large_antennas}. The  high flexibility in resource allocation makes massive MIMO a strong candidate for 5G communication systems.  However, state-of-the-art UEs are typically equipped with a small number of receive antennas which limits the spatial multiplexing gain offered by MIMO to individual UEs.
On the other hand, the combination of millimeter wave (mmW) and small cells, e.g. femtocells, has been proposed as a core network architecture for 5G indoor communication  systems \cite{JR:MMW_femto_cell,JR:MMW_femto_cell2} since most mobile data traffic is consumed indoors  \cite{Ericsson_report}.  The huge free, unlicensed frequency spectrum in the mmW frequency bands appears to be suitable and attractive for providing high speed communication services over short distances in the order of  meters. However, the  backhauling of the data from the service providers to  the small cell base stations (BSs) is a fundamental system bottleneck. In general,  the last mile connection from a backbone network to the UEs at homes can only support  high data rates if optical fibers are deployed, which is known as  fiber-to-the-home  (FTTH). Yet, the cost in deploying FTTH for all indoor users is prohibitive. For instance,  the cost in equipping every building  with FTTH in Germany is estimated to be around $67$ billion Euros \cite{FTTH} which makes high speed small cells not an appealing universal  solution for 5G indoor communication systems in terms of implementation  cost. Another attractive system architecture for 5G is to combine massive MIMO with mmW communications \cite{JR:massive_MIMO_mmw,JR:massive_MIMO_mmw2} by using outdoor mmW BSs.  However, the high penetration loss of  building walls limits the  suitability  of mmW for outdoor-to-indoor communication scenarios. Thus, additional effective system architecture for outdoor-to-indoor communication is needed.

 Distributed antenna systems  (DAS) are an existing system architecture on the network side and a special form of MIMO. DAS are able  to cover the dead spots in  wireless networks, extend service coverage, improve spectral efficiency, and  mitigate interference \cite{Mag:DAS,Mag:DAS2}.  It is expected that DAS will play an important role in 5G communication systems \cite{JR:VMIMO_heath}. Specifically, DAS can  realize the potential performance gains of MIMO systems by sharing antennas across  different terminals  of a communication system to form a virtual MIMO system \cite{VMIMO_thesis}. Lately, there has been a growing interest in combining orthogonal frequency division multiple access (OFDMA) and DAS  to pave the way  for the transition of existing communication systems to 5G  \cite{JR:virtual_MIMO1}--\nocite{JR:virtual_MIMO2}\cite{CN:DAS_OFDMA}.  In \cite{JR:virtual_MIMO1}, the authors studied suboptimal resource allocation algorithms for multiuser MIMO-OFDMA systems. In \cite{JR:virtual_MIMO2}, a utility-based low complexity scheduling scheme was proposed for multiuser MIMO-OFDMA systems to strike a balance between system throughput  and computational complexity.   The optimal subcarrier allocation, power allocation, and bit loading for  OFDMA-DAS was investigated in \cite{CN:DAS_OFDMA}. However, similar to massive MIMO, DAS cannot significantly improve  the  data rate of individual UE when the UEs are single-antenna devices.  Besides,  the
results in \cite{JR:virtual_MIMO1}--\nocite{JR:virtual_MIMO2}\cite{CN:DAS_OFDMA}, which are valid for either downlink or uplink communication, may no longer be applicable when joint optimization of downlink and uplink resource usage  is considered. Furthermore,  the total system throughput in \cite{JR:virtual_MIMO1}--\cite{CN:DAS_OFDMA} is not only limited by the number of antennas equipped at the individual UEs, but  is also constrained by the system bandwidth which is a very scarce resource in   licensed frequency bands.  In fact, licensed spectrum is usually located at sub $6$ GHz  frequencies which are suitable for long distance communication. On the contrary, the unlicensed frequency spectrum around $60$ GHz offers a large bandwidth of $7$ GHz for wireless communications but is more suitable for short distance communication. The simultaneous utilization of both  licensed  and unlicensed  frequency bands for  high rate communication introduces a paradigm shift in system and resource allocation algorithm design due to
the related new challenges and opportunities. Yet, the potential system throughput gains of such hybrid systems have not been thoroughly investigated  in the literature. Thus, in this work, we study the resource allocation design for  hybrid communication systems simultaneously utilizing  licensed and unlicensed frequency bands   to improve the system performance.

An important requirement for 5G systems is energy efficiency. Over the past decades, the development of wireless communication networks worldwide has triggered an
exponential growth in the number of wireless communication devices
for real time video teleconferencing, online high definition video streaming, environmental monitoring,  and safety management. It is expected that by $2020$, the number of interconnected devices on the planet may  reach up to $50$ billion \cite{IoT}. The related tremendous increase in the number of wireless communication transmitters and receivers has not only led to a huge demand for licensed bandwidth but also for energy. In particular, the escalating energy consumption of electronic circuitries for communication and radio frequency (RF) transmission  increases the operation cost of service providers and raises serious  environmental concerns due to the produced green house gases. As a result,   energy
efficiency  has become  as important as spectral efficiency   for evaluation of the performance of the resource utilization in communication networks. A tremendous number of green
resource allocation algorithm designs have been
proposed in the literature for
maximization of the energy efficiency  of wireless
communication systems \cite{JR:TWC_large_antennas,CR:virtual_MIMO0}\nocite{JR:EE_DAS}--\cite{JR:EE_DAS_letter}. In \cite{JR:TWC_large_antennas}, joint power allocation and subcarrier allocation was considered for energy-efficient massive MIMO systems.
  In \cite{CR:virtual_MIMO0}, the energy efficiency
 of a three-node  multiuser MIMO system  was studied for the two-hop
compress-and-forward relaying protocol. The trade-off between energy efficiency and spectral efficiency
  in DAS for fair resource allocation in flat fading channels was studied in \cite{JR:EE_DAS}. Power allocation for energy-efficient DAS was investigated in \cite{JR:EE_DAS_letter} for frequency-selective channels.
 However, in   \cite{JR:TWC_large_antennas,CR:virtual_MIMO0}--\cite{JR:EE_DAS_letter}, it was assumed that the transmit antennas were deployed by  service providers and are connected to a central unit by high cost optical  fibers or cables for facilitating simultaneous transmission  which may not be feasible in practice. To avoid this problem, unlicensed and licensed frequency bands may be used simultaneously to create a \emph{wireless data pipeline}  for  DAS to provide high rate communication services.  Nevertheless, the  resource allocation algorithm design for such a system architecture has not been investigated in the literature, yet.

 In this paper, we propose a  shared UE-side distributed antenna system (SUDAS) to assist the outdoor-to-indoor communication in 5G wireless communication systems.  In particular, SUDAS  simultaneously utilizes  licensed  and unlicensed  frequency bands to facilitate a spatial multiplexing gain for single-antenna transceivers. We formulate the resource allocation algorithm design for  SUDAS assisted OFDMA downlink/uplink transmission systems as a non-convex optimization problem.  By exploiting the structure of the optimal  precoding and  post-processing matrices adopted at  the BS and the SUDAS, the considered
matrix optimization problem is transformed into an equivalent optimization problem with scalar optimization variables. Capitalizing on this transformation, we develop an iterative algorithm
which achieves the asymptotically globally optimal performance of the proposed SUDAS for high signal-to-noise ratios (SNRs) and large numbers of subcarriers. Also, the asymptotically optimal algorithm serves as a building block for the design of a suboptimal resource allocation algorithm which achieves a locally optimal solution for the considered problem for arbitrary  SNRs.

\vspace*{-6mm}

\section{SUDAS Assisted OFDMA Network Model}\label{sect:OFDMA_AF_network_model}
 \begin{figure}[t]\vspace*{-3mm}
 \centering
\includegraphics[width=3.6in]{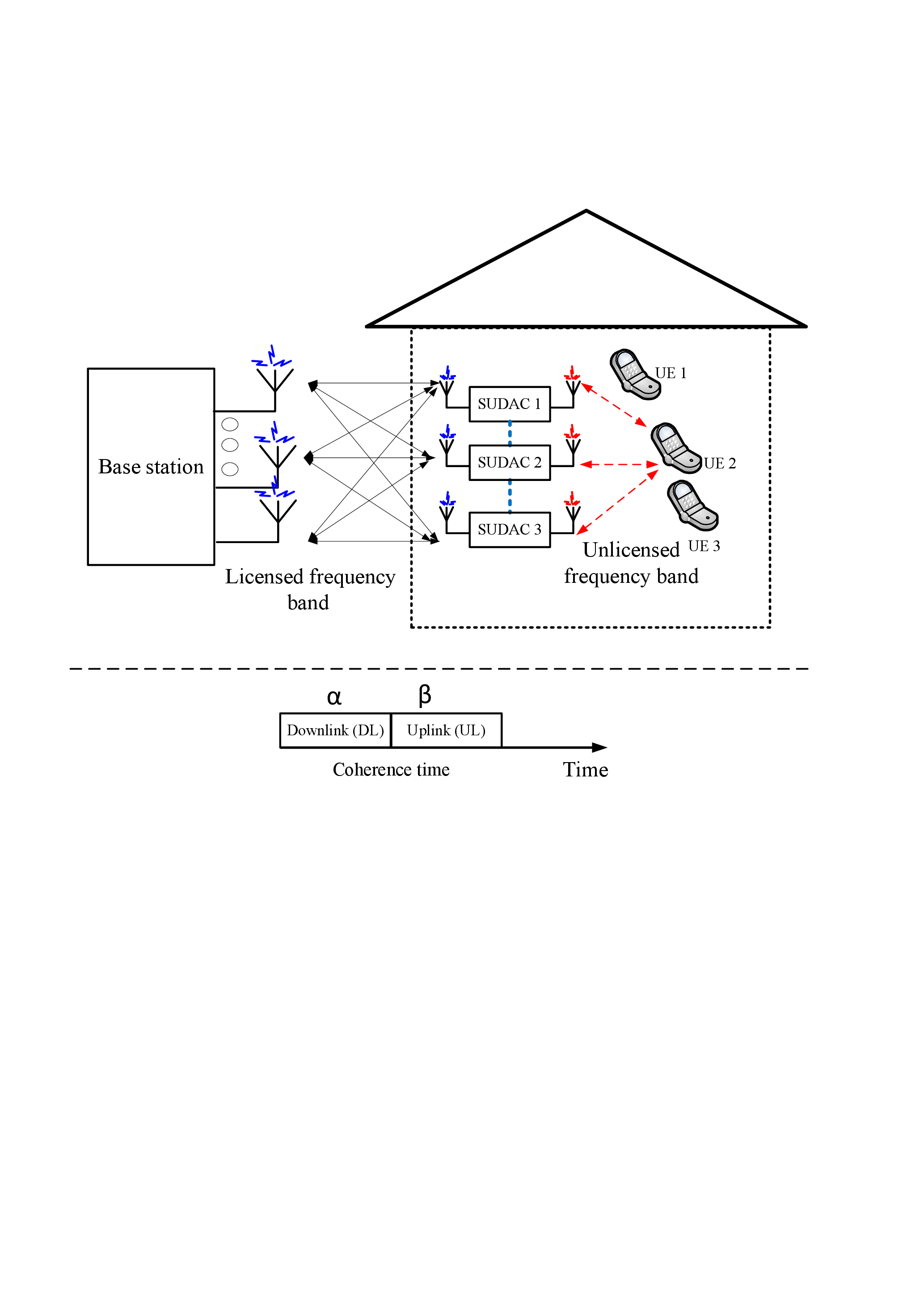}\vspace*{-5mm}
 \caption{The upper half of the figure illustrates the downlink and uplink  communication between  a base station (BS) and $K=3$  user equipments (UEs) assisted by $M=3$  SUDACs. The proposed system  utilizes a licensed frequency band and an unlicensed frequency band such as the mmW band (e.g. $\sim 60 $ GHz). The lower half of the figure depicts the time division duplex (TDD) approach adopted for downlink and uplink communication within a coherence time slot.  }\vspace*{-4mm}
 \label{fig:system_model}
\vspace*{-6mm}\end{figure}

\subsection{SUDAS System Model}
We consider a SUDAS assisted
OFDMA  downlink (DL) and uplink (UL)  transmission network which consists of one
$N$ antenna BS, a SUDAS, and $K$ single-antenna UEs, cf. Figure \ref{fig:system_model}.
  The  BS is half-duplex and equipped with  $N$ antennas for transmitting and receiving signals in a licensed frequency band.   The UEs are  single-antenna devices receiving and transmitting signals in the unlicensed frequency band. Also, we focus on a wideband multicarrier communication system with $n_{\mathrm{F}}$ orthogonal subcarriers. A SUDAS comprises  $M$ shared user equipment (UE)-side distributed antenna components (SUDACs). A SUDAC is a small and cheap device deployed inside a
   building\footnote{In practice, a SUDAC could be integrated into electrical devices such as electrical wall outlets, switches, and light outlets.} which simultaneously utilizes both a  licensed and an unlicensed frequency band for increasing the DL and UL end-to-end communication data rate. A basic SUDAC is  equipped with one antenna for use in a licensed band and one  antenna for use in an unlicensed band.  We note that the considered single-antenna model for SUDAC can be extended to the case of antenna arrays at the expense of a higher complexity and a more involved notation. Furthermore, a SUDAC is equipped with a mixer  to perform frequency up-conversion/down-conversion. For example, for DL communication, the SUDAC receives the signal from the BS in a licensed frequency band, e.g. at $800$ MHz, processes the received signal, and forwards the signal to the UEs in an unlicensed frequency band, e.g. the mmW bands. We note that since the BS-SUDAC link operates in a sub-6 GHz licensed frequency band, it is expected that the  associated path loss due to blockage by building walls is much smaller compared to the case where mmW bands were directly used for  outdoor-to-indoor communication.  Hence, the BS-to-SUDAS channel serves as wireless data pipeline for the SUDAS-to-UE communication channel.  Also, since  signal reception and transmission at each SUDAC are separated in frequency, cf. Figure \ref{fig:system_model2} and \cite{Online_sudas}, simultaneous signal reception and transmission  can be performed in the proposed SUDAS  which is not possible for traditional relaying systems\footnote{Since the BS-to-SUDAS and SUDAS-to-UE links operate in two different frequency bands, the proposed SUDAS should not be considered  a traditional relaying system \cite{JR:Jeff_7_ways}.} due to the limited spectrum availability in the licensed bands. The UL transmission via SUDAS can be performed in a similar manner as the DL transmission and the detailed operation will be discussed in the next section. In practice, a huge bandwidth is available in the unlicensed bands. For instance,  there is   nearly  $7$ GHz  unlicensed frequency spectrum available  for information transmission in the  $57 - 64$ GHz band (mmW bands).
 \begin{figure}[t]\vspace*{-3mm}
 \centering
\includegraphics[width=3.55in]{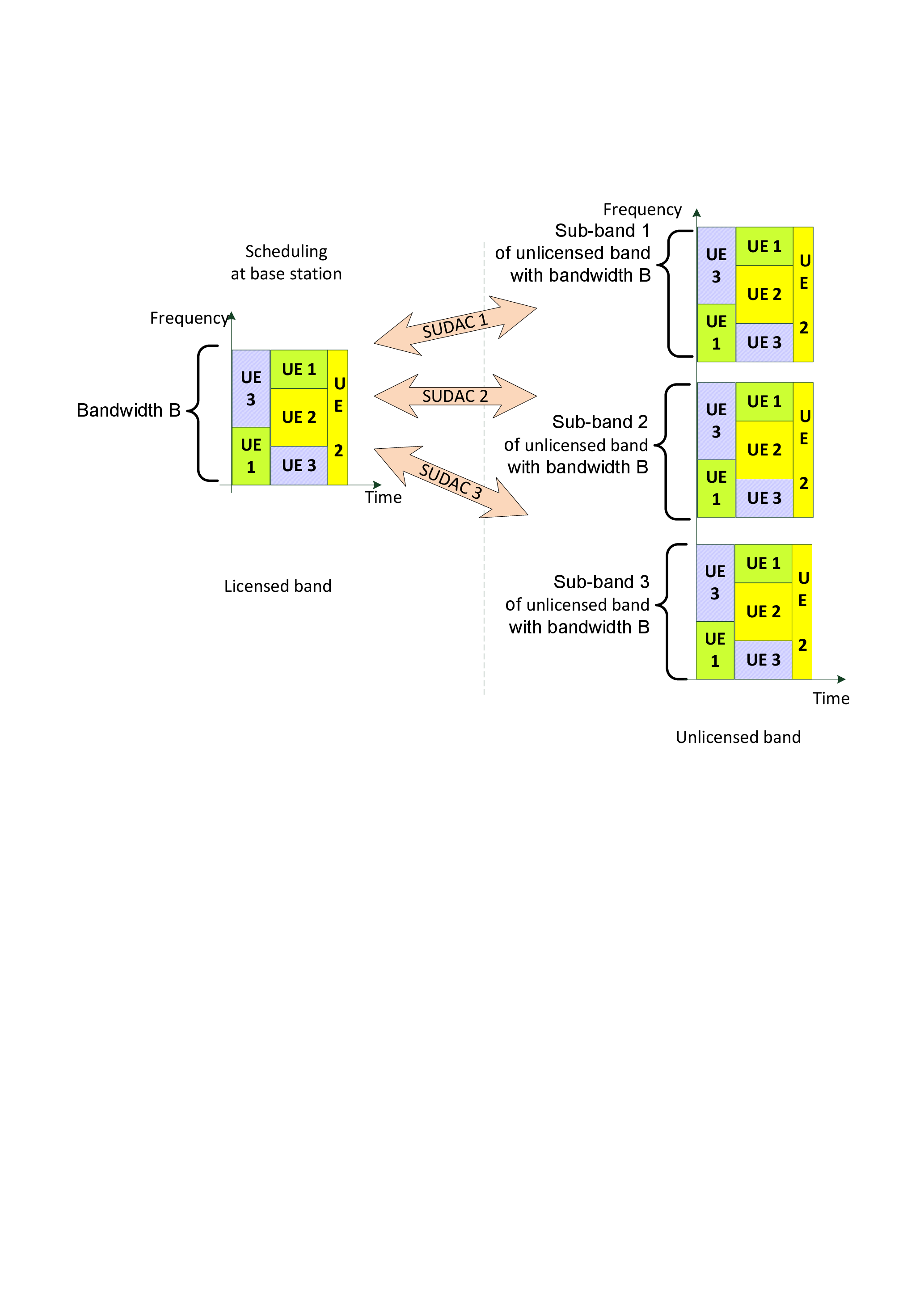}\vspace*{-6mm}
 \caption{Illustration of signal forwarding from(/to) a licensed band to(/from) different unlicensed frequency sub-bands in the SUDAS. }
 \label{fig:system_model2}
\vspace*{-8mm}\end{figure}
In this paper, we study the potential system performance gains for outdoor-to-indoor transmission achieved by the proposed SUDAS architecture. In particular, we focus on the case  where the SUDACs are installed in electrical wall outlets indoor and can cooperate with each other by sharing channel state information, power, and received signals, e.g. via  power line communication links. In other words, for the proposed resource allocation algorithm,  joint processing across the SUDACs is assumed to be possible such that the SUDACs can fully exploit the degrees of freedom offered by their antennas. The joint processing architecture  of the SUDAS in this paper reveals the maximum potential performance gain of the proposed SUDAS.

Furthermore, we adopt time division duplexing (TDD) to facilitate UL and DL communication for half-duplex UEs and BS. To simplify  the following presentation,  we assume a normalized unit length time slot whose duration is the coherence time of the channel,
 i.e., the communication channel is  time-invariant within a time slot. Each time slot is divided into two intervals of duration, $\alpha$ and $\beta$, which   are allocated for the DL and UL communication, respectively.

\vspace*{-4mm}
\subsection{SUDAS DL Channel Model}\label{sect:channel_model}
 \label{sect:channel_model}
In the DL transmission period  $\alpha$,   the BS performs \emph{spatial multiplexing} in the licensed band.  The data symbol vector $\mathbf{d}^{[i,k]}_{\mathrm{DL}}\in
\mathbb{C}^{N_{\mathrm{S}}\times 1}$ on subcarrier $i\in\{1,\,\ldots,\,n_F\}$
 for UE $k\in\{1,\,\ldots,\,K\}$ is precoded at the BS as\vspace*{-3mm}
\begin{eqnarray}
\mathbf{x}^{[i,k]}_{\mathrm{DL}}&=&\mathbf{P}^{[i,k]}_{\mathrm{DL}}\mathbf{d}^{[i,k]}_{\mathrm{DL}},\label{eqn:source_precoding}
\end{eqnarray}
where $\mathbf{P}^{[i,k]}_{\mathrm{DL}}\in\mathbb{C}^{N\times N_{\mathrm{S}}}$ is the
 precoding matrix adopted by the BS on subcarrier $i$ and $\mathbb{C}^{N\times N_{\mathrm{S}}}$ denotes the set of all $N\times N_{\mathrm{S}}$ matrices with complex entries.  The signals received on subcarrier $i$ at the $M$ SUDACs  for UE $k$ are given by
\begin{eqnarray}
\mathbf{y}^{[i,k]}_{\mathrm{S-DL}}&=&\mathbf{H}^{[i]}_{\mathrm{B}\rightarrow \mathrm{S}}\mathbf{x}^{[i,k]}_{\mathrm{DL}}+\mathbf{z}^{[i]},
\label{eqn:relay_channel_model:AF}
\end{eqnarray}
where $\mathbf{y}^{[i,k]}_{\mathrm{S-DL}}=[{y}^{[i,k]}_{\mathrm{S-DL}_1},\ldots, {y}^{[i,k]}_{\mathrm{S-DL}_M}]^T$, ${y}^{[i,k]}_{\mathrm{S-DL}_m}$ denotes the received signal at SUDAC $m\in\{1,\ldots,M\}$, and  $(\cdot)^T$ is the transpose operation. $\mathbf{H}^{[i]}_{\mathrm{B}\rightarrow \mathrm{S}} $ is the $ M\times  N$ MIMO channel
matrix between the BS and the $M$ SUDACs on subcarrier $i$ and captures
the joint effects of  path loss, shadowing, and multi-path fading.  $\mathbf{z}^{[i]}$ is the
additive white Gaussian noise (AWGN) vector  impairing the $M$ SUDACs in the licensed band on subcarrier $i$ and has a circularly symmetric complex Gaussian (CSCG) distribution ${\cal
CN}(\zero,\mathbf{\Sigma})$ on subcarrier $i$, where $\zero$ is the mean vector and
$\mathbf{\Sigma}$ is the $M\times M$
covariance matrix which is a diagonal matrix  with each main diagonal element given by $N_0$.

In the unlicensed band,  each SUDAC performs orthogonal \emph{frequency repetition}. In particular, the  $M$ SUDACs   multiply  the received signal
vector on subcarrier $i$, $\mathbf{y}^{[i,k]}_{\mathrm{S-DL}}$, by $\mathbf{F}^{[i,k]}_{\mathrm{DL}}\in\mathbb{C}^{M\times M}$ and forward
the processed signal vector to UE $k$ on subcarrier  $i$ in $M$  different independent frequency sub-bands in the unlicensed spectrum\footnote{For a signal bandwidth of $20$ MHz, there can be $350$ orthogonal sub-bands available within $7$ GHz of bandwidth in the $60$ GHz mmW band \cite{JR:massive_MIMO_mmw2}. For simplicity, we assume that each of the $M$ SUDACs uses one fixed sub-band for DL and UL communication.   }, cf. Figure \ref{fig:system_model2}.  In other words, different SUDACs forward that received signals in  different sub-bands and thereby avoid multiple access interference in the unlicensed spectrum.

The
signal  received  at UE $k$ on subcarrier $i$ from the SUDACs in the $M$ frequency bands, $\mathbf{y}^{[i,k]}_{\mathrm{S}\rightarrow\mathrm{UE}}\in\mathbb{C}^{M\times 1}$ ,  can be expressed as
\begin{eqnarray}
\mathbf{y}^{[i,k]}_{\mathrm{S}\rightarrow\mathrm{UE}}&=&\underbrace{\mathbf{H}^{[i,k]}_{\mathrm{S}\rightarrow\mathrm{UE}}\mathbf{F}^{[i,k]}_{\mathrm{DL}}
\mathbf{H}^{[i]}_{\mathrm{B}\rightarrow\mathrm{S}}\mathbf{P}^{[i,k]}_{\mathrm{DL}}\mathbf{d}^{[i,k]}_{\mathrm{DL}}}_{
\mbox{desired
signal}}+\underbrace{\mathbf{H}^{[i,k]}_{\mathrm{S}\rightarrow\mathrm{UE}}\mathbf{F}^{[i,k]}_{\mathrm{DL}}\mathbf{z}^{[i]}}_{\mbox{amplified
noise}}+\mathbf{n}^{[i,k]}.
\end{eqnarray}
The $m$-th element of vector $\mathbf{y}^{[i,k]}_{\mathrm{S}\rightarrow\mathrm{UE}}$ represents the received DL signal at UE $k$ in the $m$-th unlicensed frequency sub-band. Since the SUDACs forward the DL received signals in different orthogonal frequency bands, $\mathbf{H}^{[i,k]}_{\mathrm{S}\rightarrow\mathrm{UE}}$  is a diagonal matrix with the diagonal elements representing the channel gain between the SUDACs and UE $k$ on  subcarrier $i$ in  unlicensed sub-band $m$. $\mathbf{n}^{[i,k]}\in\mathbb{C}^{M\times 1}$ is the AWGN
vector at UE $k$ on subcarrier $i$  with distribution ${\cal
CN}(\zero,\mathbf{\Sigma}_k)$, where  $\mathbf{\Sigma}_k$
 is an ${M \times M}$ diagonal matrix and each main
diagonal element is equal to $N_{\mathrm{UE}_k}$.

We assume that $ M \ge N_{\mathrm{S}}$ and  UE $k$ employs a linear receiver for estimating the  DL data vector symbol received in the $M$ different sub-bands in the unlicensed band.
Hence, the estimated data vector symbol, $\mathbf{\hat{d}}^{[i,k]}_{\mathrm{DL}}\in\mathbb{C}^{N_\mathrm{S}\times 1}$, on subcarrier
$i$ at UE $k$ is given by
\begin{eqnarray}
\mathbf{\hat{d}}^{[i,k]}_{\mathrm{DL}}=(\mathbf{W}^{[i,k]}_{\mathrm{DL}})^H\mathbf{y}^{[i,k]}_{\mathrm{S}\rightarrow\mathrm{UE}},
\end{eqnarray}
where $\mathbf{W}^{[i,k]}_{\mathrm{DL}}\in \mathbb{C}^{M\times N_\mathrm{S}}$ is a
post-processing matrix used for subcarrier $i$ at UE $k$, and $(\cdot)^H$ denotes the Hermitian transpose. Without loss of generality, we assume that  ${\cal
E}\{\mathbf{d}^{[i,k]}_{\mathrm{DL}}(\mathbf{{d}}^{[i,k]}_{\mathrm{DL}})^H\}=\mathbf{I}_{N_\mathrm{S}}$ where $\mathbf{I}_{N_\mathrm{S}}$ is an $N_\mathrm{S}\times N_\mathrm{S}$ identity matrix and $\cal E\{\cdot\}$ denotes
statistical expectation. As a result,
the minimum mean square error (MMSE) matrix for data transmission on
subcarrier $i$ for UE $k$ via the proposed SUDAS and the optimal MMSE post-processing matrix are given by
\begin{eqnarray}
\mathbf{E}^{[i,k]}_{\mathrm{DL}}\hspace*{-2mm}&=&\hspace*{-2mm}{\cal E}\{(\mathbf{\hat{d}}^{[i,k]}_{\mathrm{DL}}-\mathbf{{d}}^{[i,k]}_{\mathrm{DL}})(\mathbf{\hat{d}}^{[i,k]}_{\mathrm{DL}}-
\mathbf{{d}}^{[i,k]}_{\mathrm{DL}})^H\}=\Big[\mathbf{I}_{N_\mathrm{S}}+(\mathbf{\Gamma}^{[i,k]}_{\mathrm{DL}})^H
(\mathbf{\Theta}^{[i,k]}_{\mathrm{DL}})^{-1}
\mathbf{\Gamma}^{[i,k]}_{\mathrm{DL}}\Big]^{-1},\\
\mbox{and }\mathbf{W}^{[i,k]}_{\mathrm{DL}}\hspace*{-2mm}&=&\hspace*{-2mm}(\mathbf{\Gamma}^{[i,k]}_{\mathrm{DL}}(\mathbf{\Gamma}^{[i,k]}_{\mathrm{DL}})^H
+\mathbf{\Theta}^{[i,k]}_{\mathrm{DL}})^{-1}\mathbf{\Gamma}^{[i,k]}_{\mathrm{DL}},
\label{eqn:AF-FD-self-interference}
\end{eqnarray}
respectively, where $(\cdot)^{-1}$ denotes the matrix inverse, $\mathbf{\Gamma}^{[i,k]}_{\mathrm{DL}}$ is the effective end-to-end channel
matrix from the BS to UE $k$ via the SUDAS on subcarrier $i$, and $\mathbf{\Theta}^{[i,k]}_{\mathrm{DL}}$
is the corresponding equivalent noise covariance matrix. These
matrices are given by
\begin{eqnarray}\label{eqn:AF-FD-equivalent_noise}
\mathbf{\Gamma}^{[i,k]}_{\mathrm{DL}}\hspace*{-2mm}&=&\hspace*{-2mm}
\mathbf{H}^{[i,k]}_{\mathrm{S}\rightarrow\mathrm{UE}}\mathbf{F}^{[i,k]}_{\mathrm{DL}}\mathbf{H}^{[i]}_{\mathrm{B}
\rightarrow\mathrm{S}}\mathbf{P}^{[i,k]}_{\mathrm{DL}}
\quad \mbox{and}\quad
\mathbf{\Theta}^{[i,k]}_{\mathrm{DL}}=
\Big(\mathbf{H}^{[i,k]}_{\mathrm{S}\rightarrow\mathrm{UE}}\mathbf{F}^{[i,k]}_{\mathrm{DL}}\Big)
\Big(\mathbf{H}^{[i,k]}_{\mathrm{S}\rightarrow\mathrm{UE}}\mathbf{F}^{[i,k]}_{\mathrm{DL}}\Big)^H+\mathbf{I}_{M}.
\end{eqnarray}
\begin{Remark}
The SUDAS concept is fundamentally different from traditional relaying systems which aim at extending service coverage \cite{JR:kwan_AF_relay,JR:MIMO_HD_relay1}. For DL communication, the SUDAS converts the spatial multiplexing performed at the BS in the licensed band into frequency multiplexing in the unlicensed band to allow  single-antenna UEs to decode multiple spatial data streams.
\end{Remark}

\vspace*{-4mm}
\subsection{SUDAS UL Channel Model}\label{sect:channel_model}
 \label{sect:channel_model}
In the UL transmission period $\beta$, UE $k$ performs \emph{frequency multiplexing} in the unlicensed band.  The data symbol vector $\mathbf{d}^{[i,k]}_{\mathrm{UL}}\in
\mathbb{C}^{N_{\mathrm{S}}\times 1}$ on subcarrier $i\in\{1,\,\ldots,\,n_{\mathrm{F}}\}$
 from UE $k$ is precoded as\vspace*{-2mm}
\begin{eqnarray}
\mathbf{x}^{[i,k]}_{\mathrm{UL}}=\mathbf{P}^{[i,k]}_{\mathrm{UL}}\mathbf{d}^{[i,k]}_{\mathrm{UL}},\label{eqn:source_precoding}
\end{eqnarray}
where $\mathbf{P}^{[i,k]}_{\mathrm{UL}}\in\mathbb{C}^{M\times N_{\mathrm{S}}}$ is the
 UL precoding matrix adopted by UE $k$ on subcarrier $i$ over the $M$ different frequency sub-bands in the unlicensed spectrum.  The signals received on subcarrier $i$ at the $M$ SUDACs  for UE $k$ are given by\vspace*{-2mm}
\begin{eqnarray}
\mathbf{y}^{[i,k]}_{\mathrm{S-UL}}&=&\mathbf{H}^{[i,k]}_{\mathrm{UE}\rightarrow \mathrm{S}}\mathbf{x}^{[i,k]}_{\mathrm{UL}}+\mathbf{v}^{[i]},
\label{eqn:relay_channel_model:AF}
\end{eqnarray}
where $\mathbf{y}^{[i,k]}_{\mathrm{S-UL}}=[{y}^{[i,k]}_{\mathrm{S-UL}_1}\ldots {y}^{[i,k]}_{\mathrm{S-UL}_M}]^T$, ${y}^{[i,k]}_{\mathrm{S-UL}_m}$ denotes the received signal at SUDAC $m$ in unlicensed frequency sub-band $m\in\{1,\ldots,M\}$, and $\mathbf{v}^{[i]}$ is the AWGN impairing the $M$ SUDACs  on subcarrier $i$ in the  unlicensed frequency band. $\mathbf{v}^{[i]}$ has distribution ${\cal CN}(\zero,\mathbf{\Sigma}_{\mathrm{UL}})$, where  $\mathbf{\Sigma}_{\mathrm{UL}}$
 is an ${M \times M}$ diagonal matrix and each main
diagonal element is equal to $N_{\mathrm{UL}}$. $\mathbf{H}^{[i,k]}_{\mathrm{UE}\rightarrow\mathrm{S}}$  is a diagonal matrix with the main diagonal elements representing the channel gains between UE $k$ and the $M$ SUDACs on  subcarrier $i$ in unlicensed sub-band $m$. In fact, the UEs-to-SUDAS channels serve as a short distance wireless data pipeline  for the SUDAS-to-BS UL communication.

Each SUDAC forwards the signals received  in the unlicensed band in the licensed band to assist the UL communication. In particular, the  $M$ SUDACs   multiply  the received signal
vector on subcarrier $i$ by $\mathbf{F}^{[i,k]}_{\mathrm{UL}}\in\mathbb{C}^{M\times M}$ and forward
the processed signal vector to the BS on subcarrier  $i$ in the licensed spectrum, cf. Figure \ref{fig:system_model2}.  As a result, the
signal  received  at the BS from UE $k$ on subcarrier $i$ via the SUDAS, $\mathbf{y}^{[i,k]}_{\mathrm{S}\rightarrow\mathrm{B}}\in\mathbb{C}^{N\times 1}$,  can be expressed as
\begin{eqnarray}
\mathbf{y}^{[i,k]}_{\mathrm{S}\rightarrow\mathrm{B}}\hspace*{-3mm}&=&\underbrace{\mathbf{H}^{[i]}_{\mathrm{S}\rightarrow\mathrm{B}}\mathbf{F}^{[i,k]}_{\mathrm{UL}}
\mathbf{H}^{[i,k]}_{\mathrm{UE}\rightarrow\mathrm{S}}\mathbf{P}^{[i,k]}_{\mathrm{UL}}\mathbf{d}^{[i,k]}_{\mathrm{UL}}}_{
\mbox{desired
signal}}+\underbrace{\mathbf{H}^{[i]}_{\mathrm{S}\rightarrow\mathrm{B}}\mathbf{F}^{[i,k]}_{\mathrm{UL}}\mathbf{z}^{[i]}}_{\mbox{amplified
noise}}+\mathbf{n}^{[i,k]}_{\mathrm{B}}.
\end{eqnarray}
Matrix  $\mathbf{H}^{[i]}_{\mathrm{S}\rightarrow\mathrm{B}}$ is the UL channel between the $M$ SUDACs and the BS on subcarrier $i$, and $\mathbf{n}^{[i,k]}_{\mathrm{B}}$  is the AWGN vector in subcarrier $i$ at the BS    with distribution ${\cal CN}(\zero,\mathbf{\Sigma}_{\mathrm{B}})$, where  $\mathbf{\Sigma}_{\mathrm{B}}$
 is an ${M \times M}$ diagonal matrix and each main
diagonal element is equal to $N_{\mathrm{B}}$.  At the BS,  we assume that $N\ge N_{\mathrm{S}}$ and the BS employs a linear receiver for estimating the  data vector symbol received from the SUDAS  in the licensed band.
The estimated data vector symbol, $\mathbf{\hat{d}}^{[i,k]}_{\mathrm{UL}}\in\mathbb{C}^{N_\mathrm{S}\times 1}$, on subcarrier
$i$ at the BS from  UE $k$ is given by
\begin{eqnarray}
\mathbf{\hat{d}}^{[i,k]}_{\mathrm{UL}}=(\mathbf{W}^{[i,k]}_{\mathrm{UL}})^H\mathbf{y}^{[i,k]}_{\mathrm{S}\rightarrow\mathrm{B}},
\end{eqnarray}
where $\mathbf{W}^{[i,k]}_{\mathrm{UL}}\in \mathbb{C}^{M\times N_{\mathrm{S}}}$ is a
post-processing matrix used for subcarrier $i$ at UE $k$. Without loss of generality, we assume that  ${\cal
E}\{\mathbf{d}^{[i,k]}_{\mathrm{UL}}(\mathbf{{d}}^{[i,k]}_{\mathrm{UL}})^H\}=\mathbf{I}_{N_\mathrm{S}}$. As a result,
the MMSE matrix for data transmission on
subcarrier $i$ from UE $k$ to the BS via the SUDAS and the optimal MMSE post-processing matrix are given by
\begin{eqnarray}
\mathbf{E}^{[i,k]}_{\mathrm{UL}}\hspace*{-2mm}&=&\hspace*{-2mm}{\cal E}\{(\mathbf{\hat{d}}^{[i,k]}_{\mathrm{UL}}-\mathbf{{d}}^{[i,k]}_{\mathrm{UL}})(\mathbf{\hat{d}}^{[i,k]}_{\mathrm{UL}}-
\mathbf{{d}}^{[i,k]}_{\mathrm{UL}})^H\}=\Big[\mathbf{I}_{N_\mathrm{S}}+(\mathbf{\Gamma}^{[i,k]}_{\mathrm{UL}})^H
(\mathbf{\Theta}^{[i,k]}_{\mathrm{UL}})^{-1}
\mathbf{\Gamma}^{[i,k]}_{\mathrm{UL}}\Big]^{-1},\\
\mbox{and }\mathbf{W}^{[i,k]}_{\mathrm{UL}}\hspace*{-2mm}&=&\hspace*{-2mm}(\mathbf{\Gamma}^{[i,k]}_{\mathrm{UL}}(\mathbf{\Gamma}^{[i,k]}_{\mathrm{UL}})^H
+\mathbf{\Theta}^{[i,k]}_{\mathrm{UL}})^{-1}\mathbf{\Gamma}^{[i,k]}_{\mathrm{UL}},
\label{eqn:AF-FD-self-interference}
\end{eqnarray}
respectively, where $\mathbf{\Gamma}^{[i,k]}_{\mathrm{UL}}$ is the effective end-to-end channel
matrix from UE $k$ to the BS via the SUDAS on subcarrier $i$, and $\mathbf{\Theta}^{[i,k]}_{\mathrm{UL}}$
is the corresponding equivalent noise covariance matrix. These
matrices are given by
\begin{eqnarray}
\mathbf{\Gamma}^{[i,k]}_{\mathrm{UL}}\hspace*{-2mm}&=&\hspace*{-2mm}
\mathbf{H}^{[i]}_{\mathrm{S}\rightarrow\mathrm{B}}\mathbf{F}^{[i,k]}_{\mathrm{UL}}\mathbf{H}^{[i,k]}_{\mathrm{UE}
\rightarrow\mathrm{S}}\mathbf{P}^{[i,k]}_{\mathrm{UL}}
\quad \mbox{and}\quad\mathbf{\Theta}^{[i,k]}_{\mathrm{UL}}=
\Big(\mathbf{H}^{[i,k]}_{\mathrm{UE}\rightarrow\mathrm{S}}\mathbf{F}^{[i,k]}_{\mathrm{UL}}\Big)
\Big(\mathbf{H}^{[i,k]}_{\mathrm{UE}\rightarrow\mathrm{S}}\mathbf{F}^{[i,k]}_{\mathrm{UL}}\Big)^H+\mathbf{I}_{M}.
\end{eqnarray}
\begin{Remark}
Since TDD is adopted and DL and UL transmission occur consecutively within the same coherence time,  for resource allocation algorithm design, it is reasonable to assume that channel reciprocity holds, i.e., $\mathbf{H}^{[i,k]}_{\mathrm{UE}\rightarrow\mathrm{S}}=(\mathbf{H}^{[i,k]}_{\mathrm{S}\rightarrow\mathrm{UE}})^H$  and $\mathbf{H}^{[i]}_{\mathrm{S}\rightarrow\mathrm{B}}=(\mathbf{H}^{[i]}_{\mathrm{B}\rightarrow\mathrm{S}})^H$.
\end{Remark}\vspace*{-6mm}
\section{Problem Formulation }\label{sect:cross-layer system}
In this section, we first introduce the adopted system performance measure. Then, the design of resource allocation and scheduling is formulated as an optimization problem.\vspace*{-4mm}
\subsection{System Throughput}
\label{subsect:Instaneous_Mutual_information}
The end-to-end DL and UL achievable data rate  on subcarrier $i$ between the BS and UE $k$ via the SUDAS
 are given by  \cite{JR:Yue_Rong_diagonalization}
  \begin{eqnarray}\label{eqn:cap_log_det}
R^{[i,k]}_{\mathrm{DL}}=-\log_2\Big(\det[\mathbf{E}^{[i,k]}_{\mathrm{DL}}]\Big)\quad \mbox{and}\quad R^{[i,k]}_{\mathrm{UL}}=-\log_2\Big(\det[\mathbf{E}^{[i,k]}_{\mathrm{UL}}]\Big),
\end{eqnarray}
respectively, where $\det(\cdot)$ is the determinant operation. The DL and UL data rate (bits/s)
for UE $k$  can be expressed as
\begin{eqnarray}
\label{eqn:user_TP}\rho^{[k]}_{\mathrm{DL}}=
\sum_{i=1}^{n_{\mathrm{F}}}s^{[i,k]}_{\mathrm{DL}}R^{[i,k]}_{\mathrm{UL}} \quad \mbox{and}\quad \rho^{[k]}_{\mathrm{UL}}=
\sum_{i=1}^{n_{\mathrm{F}}}s^{[i,k]}_{\mathrm{UL}}R^{[i,k]}_{\mathrm{DL}},
\end{eqnarray}
respectively, where  $s^{[i,k]}_{\mathrm{DL}}\in\{0,\alpha\}$ and $s^{[i,k]}_{\mathrm{UL}}\in\{0,\beta\}$ are the discrete subcarrier allocation
indicators, respectively. In particular,  a DL and an UL subcarrier can only be utilized for $\alpha$ and $\beta$ portions of the coherence time, respectively, or not be used at all.

 The   system throughput  is given by
 \begin{eqnarray}
\label{eqn:avg-sys-TP}{\cal U}({\cal P},{\cal
S})=\sum_{k=1}^K\rho^{[k]}_{\mathrm{DL}} +\sum_{k=1}^K \rho^{[k]}_{\mathrm{UL}}  \quad [\mbox{bits/s}] ,
\end{eqnarray}
where ${\cal
P}=\{\mathbf{P}^{[i,k]}_{\mathrm{DL}},\mathbf{F}^{[i,k]}_{\mathrm{DL}},\mathbf{P}^{[i,k]}_{\mathrm{UL}},
\mathbf{F}^{[i,k]}_{\mathrm{UL}}\}$ and
${\cal{S}}=\{s^{[i,k]}_{\mathrm{DL}},s^{[i,k]}_{\mathrm{UL}},\alpha,\beta\}$ are the precoding and subcarrier allocation
policies, respectively.

 On the other hand, the power consumption of the considered SUDAS assisted communication system consists of seven terms which can be divided into three groups and expressed as
\begin{subequations} \label{eqn:power_consumption}
\begin{eqnarray}
 \label{eqn:power_consumption1}
\hspace*{-8mm}{\cal U}_{\mathrm{TP}}({\cal P},{\cal
S})\hspace*{-2mm}&=&\hspace*{-2mm}\underbrace{P_{\mathrm{C}_{\mathrm{B}}}+NP_{\mathrm{Ant}_{\mathrm{B}}}+ M P_{\mathrm{C}_{\mathrm{SUDAC}}}+ K P_{\mathrm{C}_{\mathrm{UE}}}}_{\mbox{System circuit power consumption}}\\  \label{eqn:power_consumption2}
&+&\underbrace{\sum_{k=1}^{K}\sum_{i=1}^{n_{\mathrm{F}}}s^{[i,k]}_{\mathrm{DL}}\varepsilon_{\mathrm{B}}
\Tr\Big(\mathbf{P}^{[i,k]}_{\mathrm{DL}}(\mathbf{P}^{[i,k]}_{\mathrm{DL}})^H
\Big) +\sum_{k=1}^{K}\sum_{i=1}^{n_{\mathrm{F}}}s^{[i,k]}_{\mathrm{DL}}\varepsilon_{\mathrm{S}}
\Tr\Big(\mathbf{G}^{[i,k]}_{\mathrm{DL}}\Big)}_{\mbox{Total DL transmit  power consumption}} \\ \label{eqn:power_consumption3}
&+&\underbrace{\sum_{k=1}^{K}\varepsilon_k \sum_{i=1}^{n_{\mathrm{F}}} s^{[i,k]}_{\mathrm{UL}}\Tr\Big(\mathbf{P}^{[i,k]}_{\mathrm{UL}}(\mathbf{P}^{[i,k]}_{\mathrm{UL}})^H\Big) +\sum_{k=1}^{K}\sum_{i=1}^{n_{\mathrm{F}}}s^{[i,k]}_{\mathrm{UL}}
\varepsilon_{\mathrm{S}}\Tr\Big(\mathbf{G}^{[i,k]}_{\mathrm{UL}}\Big)}_{\mbox{Total UL   transmit  power consumption}} \quad [\mbox{Joule/s}]
\end{eqnarray}
\end{subequations}
\begin{eqnarray}\label{eqn:G1}
\mbox{where }\quad\mathbf{G}^{[i,k]}_{\mathrm{DL}}&=&
\mathbf{F}^{[i,k]}_{\mathrm{DL}}\Big(\mathbf{H}^{[i]}_{\mathrm{B}\rightarrow\mathrm{S}}\mathbf{P}^{[i,k]}_{\mathrm{DL}}
(\mathbf{P}^{[i,k]}_{\mathrm{DL}})^H(\mathbf{H}^{[i]}_{\mathrm{B}\rightarrow\mathrm{S}})^H
+
\mathbf{I}_M\Big)(\mathbf{F}^{[i,k]}_{\mathrm{DL}})^H,\\
\mathbf{G}^{[i,k]}_{\mathrm{UL}}&=&
\mathbf{F}^{[i,k]}_{\mathrm{UL}}\Big(\mathbf{H}^{[i,k]}_{\mathrm{UE}\rightarrow\mathrm{S}}\mathbf{P}^{[i,k]}_{\mathrm{UL}}
(\mathbf{P}^{[i,k]}_{\mathrm{UL}})^H(\mathbf{H}^{[i,k]}_{\mathrm{UE}\rightarrow\mathrm{S}})^H
+
\mathbf{I}_M\Big)(\mathbf{F}^{[i,k]}_{\mathrm{UL}})^H,\label{eqn:G2}
\end{eqnarray}
and $\Tr(\cdot)$ is the trace operator. The three positive constant terms in (\ref{eqn:power_consumption1}), i.e., $P_{\mathrm{C}_{\mathrm{B}}},$ $P_{\mathrm{C}_{\mathrm{SUDAC}}}$, and $ P_{\mathrm{C}_{\mathrm{UE}}}$,  represent the power dissipation of the circuits \cite{JR:limited_backhaul} for the basic operation of the BS, the SUDAC, and the UE, respectively, and  $P_{\mathrm{Ant}_{\mathrm{B}}}$ denotes the circuit power consumption per BS antenna. Equations \eqref{eqn:power_consumption2} and \eqref{eqn:power_consumption3} denote the total DL transmit power consumption and the total UL power consumption, respectively. Specifically,  $\Tr(\mathbf{G}^{[i,k]}_{\mathrm{DL}})$ and $\Tr(\mathbf{G}^{[i,k]}_{\mathrm{UL}})$  are the  DL and UL transmit powers of the SUDAS needed for facilitating the DL and UL communication of UE $k$ in subcarrier $i$, respectively. Similarly, $\Tr\Big(\mathbf{P}^{[i,k]}_{\mathrm{DL}}(\mathbf{P}^{[i,k]}_{\mathrm{DL}})^H\Big)$ and $\Tr\Big(\mathbf{P}^{[i,k]}_{\mathrm{UL}}(\mathbf{P}^{[i,k]}_{\mathrm{UL}})^H\Big)$ represent the DL transmit power from the BS to the SUDAS for UE $k$   and the UL transmit power from UE $k$ to the SUDAS in subcarrier $i$, respectively. To capture the power inefficiency
of power amplifiers, we introduce linear multiplicative constants $\varepsilon_{\mathrm{B}},\varepsilon_{\mathrm{S}},$ and $\varepsilon_k$ for the power radiated by the BS, the SUDAS, and UE $k$ in (\ref{eqn:power_consumption}), respectively. For instance, if
$\varepsilon_{\mathrm{B}}=4$, then for $1$ Watt of power radiated
in the RF, the BS  consumes $4$ Watt of power  which leads to a power amplifier efficiency of $25\%$\footnote{In this paper, we assume that  Class A power amplifiers with linear characteristic  are implemented at the transceivers. The maximum power efficiency of Class A amplifiers is limited to $25\%$.}.

The \emph{energy efficiency} of the considered system is defined as the
total  number of bits exchanged between the BS and the $K$ UEs via the SUDAS per Joule consumed energy:
\begin{eqnarray}
 \label{eqn:avg-sys-eff} \hspace*{-8mm}{\cal U}_{\mathrm{eff}}({\cal P},{\cal S})&=&
 \frac{{\cal U}({\cal P},{\cal
S})}{{\cal U}_{\mathrm{TP}}({\cal P},{\cal S})}\quad [\mbox{bits/Joule}].\vspace*{-4mm}
\end{eqnarray}
\subsection{Problem Formulation}\label{sect:cross-Layer_formulation}
 The optimal precoding matrices, ${\cal P}^*=\{\mathbf{P}^{[i,k]*}_{\mathrm{DL}},\mathbf{F}^{[i,k]*}_{\mathrm{DL}},\mathbf{P}^{[i,k]*}_{\mathrm{UL}},\mathbf{F}^{[i,k]*}_{\mathrm{UL}} \}$, and the optimal subcarrier allocation policy, ${\cal
S}^*=\{s^{[i,k]*}_{\mathrm{DL}},s^{[i,k]*}_{\mathrm{UL}},\alpha^*,\beta^*\}$, can be obtained by solving the following optimization problem:\vspace*{-4mm}
\begin{eqnarray}\label{eqn:cross-layer-formulation}
&&\hspace*{-3mm}\underset{{{\cal P},{\cal S}}}\maxo\ \
{\cal U}_{\mathrm{eff}}({\cal P},{\cal S}) \notag\\
\mathrm{s.t.} &\mbox{C1:}& \sum_{k=1}^{K}\sum_{i=1}^{n_{\mathrm{F}}}s^{[i,k]}_{\mathrm{DL}}\Tr\Big(\mathbf{P}^{[i,k]}_{\mathrm{DL}}(\mathbf{P}^{[i,k]}_{\mathrm{DL}})^H\Big) \le P_\mathrm{T},\notag\\
&\mbox{C2:}& \sum_{k=1}^{K}\sum_{i=1}^{n_{\mathrm{F}}}s^{[i,k]}_{\mathrm{DL}}\Tr\Big(\mathbf{G}^{[i,k]}_{\mathrm{DL}}\Big) \le M P_{\max}, \notag\\
&\mbox{C3:}& \sum_{i=1}^{n_{\mathrm{F}}}s^{[i,k]}_{\mathrm{UL}}\Tr\Big(\mathbf{P}^{[i,k]}_{\mathrm{UL}}(\mathbf{P}^{[i,k]}_{\mathrm{UL}})^H\Big) \le P_{\max_k},\forall k\in\{1,\ldots,K\},\notag\\
&\mbox{C4:}& \sum_{k=1}^{K}\sum_{i=1}^{n_{\mathrm{F}}}s^{[i,k]}_{\mathrm{UL}}\Tr\Big(\mathbf{G}^{[i,k]}_{\mathrm{UL}}\Big) \le  P_{\max}^{\mathrm{UL}}, \notag\\
&\mbox{C5:}& \rho^{[k]}_{\mathrm{DL}}\ge R^{\mathrm{DL}}_{\min_k} , \forall k\in {{\cal D}_\mathrm{DL}},\quad\mbox{C6: } \rho^{[k]}_{\mathrm{UL}}\ge  R^{\mathrm{UL}}_{\min_k} , \forall k\in {{\cal D}_\mathrm{UL}},\notag\\
&\mbox{C7:}&  \sum_{i=1}^{n_{\mathrm{F}}}s^{[i,k]}_{\mathrm{DL}}\le \alpha, \,\forall i,\hspace*{16.5mm} \mbox{C8:}\sum_{i=1}^{n_{\mathrm{F}}}s^{[i,k]}_{\mathrm{UL}}\le \beta, \quad\forall i,\notag\\
&\mbox{C9:}&s^{[i,k]}_{\mathrm{DL}}\in\{0,\alpha\}, \forall i,k,\notag \hspace*{10.5mm}\mbox{C10: }s^{[i,k]}_{\mathrm{UL}}\in\{0,\beta\},  \forall i,k,\notag\\
 &\mbox{C11:}& \alpha+\beta\le 1, \hspace*{25.5mm}\mbox{C12: } \alpha,\beta\geq 0.
\end{eqnarray}
Constants $P_{\mathrm{T}}$ and $M P_{\max}$ in C1 and C2 are the  maximum transmit power allowances for the BS and the SUDAS ($M$ SUDACs) for DL transmission, respectively, where $P_{\max}$ is the average transmit power budget for a SUDAC.   Similarly, constraints C3 and C4  limit the  transmit power  for UE $k$ and the SUDAS ($M$ SUDACs) for UL transmission, respectively, where $P_{\max_k}$ and $P_{\max}^{\mathrm{UL}}$ are the maximum transmit power budgets of UE $k$ and the SUDAS, respectively. We note that in practice the maximum transmit power allowances for the SUDAS-to-UE, $P_{\max}$, and SUDAS-to-BS, $P_{\max}^{\mathrm{UL}}$, may be different due to different regulations in licensed and unlicensed bands.  Sets ${{\cal D}_\mathrm{DL}}$ and ${{\cal D}_\mathrm{UL}}$ in constraints C5 and C6 denote the set of delay sensitive UEs for DL and UL communication, respectively. In particular, the system has to guarantee  a minimum required DL data rate $R^{\mathrm{DL}}_{\min_k}$ and UL data rate $R^{\mathrm{UL}}_{\min_k}$,  if  UE $k$ requests delay sensitive services in the DL and UL, respectively.  Constraints C7 -- C10 are imposed to
guarantee that each subcarrier can serve at most one UE for  DL and UL communication for fractions of $\alpha$ and $\beta$  of the available time. Constraints C11 and C12 are the boundary conditions  for the durations of DL and UL transmission.


\vspace*{-3mm}
\section{Resource Allocation Algorithm Design}
\label{transf_opt_prob}
The considered optimization problem has a non-convex objective function in fractional form. Besides, the precoding matrices $\{\mathbf{P}^{[i,k]}_{\mathrm{DL}},\mathbf{P}^{[i,k]}_{\mathrm{UL}}\}$ and $\{\mathbf{F}^{[i,k]}_{\mathrm{DL}},
\mathbf{F}^{[i,k]}_{\mathrm{UL}}\}$ are coupled in \eqref{eqn:G1} and \eqref{eqn:G2}  leading to a non-convex feasible solution set in \eqref{eqn:cross-layer-formulation}. Also, constraints C9 and C10 are combinatorial constraints which results in a discontinuity in the solution set.   In general, there is no systematic approach for solving non-convex optimization problems optimally. In many cases, an exhaustive search
method may be needed to obtain the global optimal solution. Yet, applying such method to our problem will lead to prohibitively high computational complexity since the search space for the optimal solution grows exponentially with respect to $K$ and $n_{\mathrm{F}}$.  In order to make the problem tractable, we first transform the objective function in fractional form into  an equivalent objective function in subtractive form via fractional programming theory. Subsequently, majorization theory is exploited to obtain the  structure of the optimal precoding policy to further simplify the problem. Then, we employ constraint relaxation to handle the binary constraints C9 and C10 to obtain an asymptotically optimal resource allocation algorithm in high SNR regime and for large numbers of subcarriers.

\subsection{Transformation of the Optimization Problem}
  For notational simplicity, we define $\mathcal{F}$ as the set of
feasible solutions of the optimization problem in
(\ref{eqn:cross-layer-formulation}) spanned by constraints C1 -- C12.  Without loss of generality,  we assume that $\{{\cal P},{\cal S}\}\in\mathcal{F}$ and  the solution set $\mathcal{F}$ is non-empty and compact. Then, the
maximum energy efficiency of the SUDAS assisted communication, denoted as $\eta_{\mathrm{eff}}^*$,   is given by
\begin{eqnarray}
\eta_{\mathrm{eff}}^*= \frac{{\cal U}({\cal P^*},{\cal
S^*})}{{\cal U}_{\mathrm{TP}}({\cal P^*},{\cal S^*})}=\underset{ \{{\cal P},{\cal S}\}\in\mathcal{F}}{\maxo}\,\frac{{\cal U}({\cal P},{\cal
S})}{{\cal U}_{\mathrm{TP}}({\cal P},{\cal S})}.
\end{eqnarray}

Now, we introduce the following  theorem for handling  the  optimization problem in \eqref{eqn:cross-layer-formulation}.
\begin{Thm}\label{Thm:1}
By nonlinear fractional programming theory  \cite{JR:fractional},  the resource allocation policy achieves the maximum energy efficiency $\eta_{\mathrm{eff}}^*$ if and only if it satisfies
\begin{eqnarray}\notag
\underset{ \{{\cal P},{\cal S}\}\in\mathcal{F}}{\maxo} \,\,{\cal U}({\cal P},{\cal
S})-\eta_{\mathrm{eff}}^*\,{\cal U}_{\mathrm{TP}}({\cal P},{\cal S})= {\cal U}({\cal P^*},{\cal
S^*})-\eta_{\mathrm{eff}}^*\,{\cal U}_{\mathrm{TP}}({\cal P^*},{\cal S^*})=0,
 \label{eqn:thm1}
\end{eqnarray}
\end{Thm}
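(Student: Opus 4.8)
The plan is to recognize this as the standard nonlinear fractional programming (Dinkelbach) equivalence between the fractional program and its parametric subtractive form, and to establish the two implications directly. Throughout I write $f({\cal P},{\cal S})={\cal U}({\cal P},{\cal S})$ and $g({\cal P},{\cal S})={\cal U}_{\mathrm{TP}}({\cal P},{\cal S})$, and I record the crucial fact that $g>0$ strictly on ${\cal F}$: the circuit-power terms $P_{\mathrm{C}_{\mathrm{B}}}+NP_{\mathrm{Ant}_{\mathrm{B}}}+MP_{\mathrm{C}_{\mathrm{SUDAC}}}+KP_{\mathrm{C}_{\mathrm{UE}}}$ in \eqref{eqn:power_consumption1} are positive constants, so the denominator never vanishes and any inequality may be multiplied or divided through by $g$ without reversing. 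It is also convenient to define the auxiliary function $F(\eta)=\max_{\{{\cal P},{\cal S}\}\in{\cal F}}\big(f({\cal P},{\cal S})-\eta\,g({\cal P},{\cal S})\big)$, whose maximum is attained because ${\cal F}$ is compact and the objective continuous.

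First I would prove necessity (the ``only if'' direction). Assume $\{{\cal P}^*,{\cal S}^*\}$ attains the maximum energy efficiency, so that $\eta_{\mathrm{eff}}^*=f({\cal P}^*,{\cal S}^*)/g({\cal P}^*,{\cal S}^*)$ and $f({\cal P},{\cal S})/g({\cal P},{\cal S})\le\eta_{\mathrm{eff}}^*$ for every feasible point. Multiplying through by $g>0$ yields $f({\cal P},{\cal S})-\eta_{\mathrm{eff}}^*\,g({\cal P},{\cal S})\le 0$ for all $\{{\cal P},{\cal S}\}\in{\cal F}$, with equality attained at $\{{\cal P}^*,{\cal S}^*\}$. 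Hence the maximum of $f-\eta_{\mathrm{eff}}^*\,g$ over ${\cal F}$ equals $0$ and is achieved precisely at the optimal fractional solution, which is exactly \eqref{eqn:thm1}.

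Next I would prove sufficiency (the ``if'' direction). Suppose some feasible $\{{\cal P}',{\cal S}'\}$ satisfies $f({\cal P}',{\cal S}')-\eta_{\mathrm{eff}}^*\,g({\cal P}',{\cal S}')=0$ while $f({\cal P},{\cal S})-\eta_{\mathrm{eff}}^*\,g({\cal P},{\cal S})\le 0$ for all feasible points. Dividing the inequality by $g>0$ gives $f/g\le\eta_{\mathrm{eff}}^*$ for every $\{{\cal P},{\cal S}\}\in{\cal F}$, so $\eta_{\mathrm{eff}}^*$ upper-bounds the fractional objective; the equality gives $f({\cal P}',{\cal S}')/g({\cal P}',{\cal S}')=\eta_{\mathrm{eff}}^*$, showing the bound is met at $\{{\cal P}',{\cal S}'\}$. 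Therefore $\{{\cal P}',{\cal S}'\}$ is a maximizer of the original fractional problem and attains $\eta_{\mathrm{eff}}^*$.

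The argument is short because the entire content rests on the sign of the denominator; the one point that genuinely requires care is confirming $g>0$ on all of ${\cal F}$ (guaranteed here by the strictly positive circuit-power constants), since that is what licenses clearing the denominator in both directions. If I wanted the full Dinkelbach machinery I would additionally verify that $F(\eta)$ is continuous and strictly decreasing in $\eta$ — strict monotonicity again following from $g>0$ — which shows $F$ has a unique zero and justifies the iterative root-finding algorithm developed later in the paper; but for the stated equivalence \eqref{eqn:thm1} the two implications above are sufficient.
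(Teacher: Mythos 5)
Your proof is correct, but note that the paper itself does not actually prove Theorem~\ref{Thm:1}: its entire ``proof'' consists of the sentence referring the reader to \cite{JR:fractional} (Dinkelbach, 1967). What you have written is the standard self-contained argument that this reference contains. Both of your implications are sound: in each direction the whole argument is licensed by clearing the denominator, and you correctly isolate the fact that makes this legitimate, namely ${\cal U}_{\mathrm{TP}}({\cal P},{\cal S})>0$ uniformly on ${\cal F}$ because of the constant circuit-power terms $P_{\mathrm{C}_{\mathrm{B}}}+NP_{\mathrm{Ant}_{\mathrm{B}}}+MP_{\mathrm{C}_{\mathrm{SUDAC}}}+KP_{\mathrm{C}_{\mathrm{UE}}}$ in \eqref{eqn:power_consumption1} (the paper itself only states the weaker side condition ${\cal U}_{\mathrm{TP}}>0$ after the theorem, without tracing it to the circuit power). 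You also invoke compactness of ${\cal F}$ and continuity of the objective so that the maxima in question are attained, which matches the paper's standing assumption that ${\cal F}$ is non-empty and compact. What your route buys is a proof the reader can check without consulting the reference; what the paper's route buys is brevity and an implicit appeal to the full Dinkelbach theory (existence, uniqueness of the root of $F(\eta)$, and convergence of the iterative algorithm in Table~\ref{table:algorithm1}), which your closing remark correctly identifies as the additional facts — continuity and strict monotonicity of $F$ — that one would need to verify to justify Algorithm~1 itself, but which are not needed for the stated equivalence.
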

for ${\cal U}({\cal P},{\cal
S})\ge0$ and ${\cal U}_{\mathrm{TP}}({\cal P},{\cal S})>0$.

 \emph{\,Proof:} Please refer to  \cite{JR:fractional} for a proof of Theorem 1. \qed

Theorem \ref{Thm:1} states the necessary and sufficient condition for a  resource allocation policy to be globally optimal. Hence, for an optimization problem
with an objective function in fractional form, there exists an
equivalent optimization problem with an
objective function in subtractive form, e.g. ${\cal U}({\cal P^*},{\cal
S^*})-\eta_{\mathrm{eff}}^*\,{\cal U}_{\mathrm{TP}}({\cal P^*},{\cal S^*})$ in this paper,  such that  the same optimal resource allocation policy solves both problems. Therefore, without loss of generality,  we can focus on the objective function in equivalent subtractive form to design a resource allocation policy which satisfies Theorem \ref{Thm:1} in the sequel.

\vspace*{-3mm}
\subsection{Asymptotically Optimal Solution}\label{sect:optimal_solution}
In this section, we propose an asymptotically optimal iterative algorithm  based on the
Dinkelbach method \cite{JR:fractional}  for solving
\eqref{eqn:cross-layer-formulation} with an equivalent objective function such that the obtained solution satisfies the conditions stated in Theorem  \ref{Thm:1}. The
proposed iterative  algorithm is summarized in Table \ref{table:algorithm1} (on the next page) and
the convergence to the optimal energy efficiency is guaranteed if the inner problem (\ref{eqn:inner_loop}) is solved in each iteration. Please refer to  \cite{JR:fractional} for a proof of the convergence of the iterative algorithm.

The iterative algorithm is implemented with a repeated loop. In each iteration in the
main loop, i.e., lines $3-10$,  we solve the following optimization problem  for a given
parameter $\eta_{\mathrm{eff}}$:
\begin{eqnarray}\label{eqn:inner_loop}\vspace*{-4mm}
&&\hspace*{-18mm}
\underset{ {{\cal P},{\cal S}}}{\maxo} \,\,{\cal U}({\cal P},{\cal
S})-\eta_{\mathrm{eff}}{\cal U}_{\mathrm{TP}}({\cal P},{\cal S})\nonumber\\
&&\hspace*{-6mm}\mathrm{s.t.} \,\,\mbox{C1 -- C12}.
\end{eqnarray}

\begin{table}[t]\vspace*{-5mm}\caption{Iterative Resource Allocation Algorithm.}\label{table:algorithm1}\vspace*{-12mm}
\begin{algorithm} [H]                 
\caption{Iterative Resource Allocation Algorithm }          
\label{alg1}                           
\begin{algorithmic} [1] \label{algorithm1}
\STATE Initialization: $L_{\max}=$ the maximum number of iterations  and  $\Delta\rightarrow 0$ is the
maximum tolerance
 \STATE Set $\eta_{\mathrm{eff}}=0$ and iteration index $t=0$

\REPEAT [Iteration Process: Main Loop]
\STATE For
a  given $\eta_{\mathrm{eff}}$, solve  (\ref{eqn:inner_loop}) and obtain an intermediate resource allocation policy $\{{\cal P'},{\cal S'}\}$
\IF {$\abs{{\cal U}({\cal P'},{\cal S'})-\eta_{\mathrm{eff}} {\cal U}_{\mathrm{TP}}({\cal P'},{\cal S'})<\Delta}$} \STATE  $\mbox{Convergence}=\,$\TRUE,\quad \textbf{return}
$\{{\cal P^*,\cal S^*}\}=\{{\cal P',\cal S'}\}$ and $\eta_{\mathrm{eff}}^*=\frac{{\cal U}({\cal
P'},{\cal S'})}{ {\cal U}_{\mathrm{TP}}({\cal P'},{\cal S'})}$
 \ELSE \STATE
Set $\eta_{\mathrm{eff}}=\frac{{\cal U}({\cal P'},{\cal S'})}{{\cal U}_{\mathrm{TP}}({\cal
P'},{\cal S'})}$ and $t=t+1$,  convergence $=$ \FALSE
 \ENDIF
 \UNTIL{Convergence $=$ \TRUE $\,$or $t=L_{\max}$}

\end{algorithmic}
\end{algorithm}
\vspace*{-1.6cm}
\end{table}

\subsubsection*{Solution of the Main Loop Problem (\ref{eqn:inner_loop})}
The transformed objective function is in subtractive form and is parameterized by variable $\eta_{\mathrm{eff}}$. Yet, the transformed problem is still a non-convex optimization problem. We handle the coupled precoding matrices by  studying the structure of the optimal precoding matrices for \eqref{eqn:inner_loop}.  In this context, we define the following matrices
to facilitate  the subsequent presentation. Using singular
value decomposition (SVD), the DL two-hop channel matrices $\mathbf{H}^{[i]}_{\mathrm{B}\rightarrow\mathrm{S}}$ and
$\mathbf{H}^{[i,k]}_{\mathrm{S}\rightarrow\mathrm{UE}}$ can be written as
 \begin{eqnarray}\label{eqn:SVD_HSR_HRD}
 \mathbf{H}^{[i]}_{\mathrm{B}\rightarrow\mathrm{S}}&=&\mathbf{U}^{[i]}_{\mathrm{B}\rightarrow\mathrm{S}}
 \mathbf{\Lambda}^{[i]}_{\mathrm{B}\rightarrow\mathrm{S}}(\mathbf{V}^{[i]}_{\mathrm{B}\rightarrow\mathrm{S}})^H \quad\mbox{and} \quad
 \mathbf{H}^{[i,k]}_{\mathrm{S}\rightarrow\mathrm{UE}}=
 \mathbf{U}^{[i,k]}_{\mathrm{S}\rightarrow\mathrm{UE}}\mathbf{\Lambda}^{[i,k]}_{\mathrm{S}\rightarrow\mathrm{UE}}
 (\mathbf{V}^{[i,k]}_{\mathrm{S}\rightarrow\mathrm{UE}})^H, \end{eqnarray}
 respectively, where $\mathbf{U}^{[i]}_{\mathrm{B}\rightarrow\mathrm{S}}\in{\mathbb{C}^{ M\times M}},\mathbf{V}^{[i]}_{\mathrm{B}\rightarrow\mathrm{S}}\in{\mathbb{C}^{N\times N}},
 \mathbf{U}^{[i,k]}_{\mathrm{S}\rightarrow\mathrm{UE}}\in{\mathbb{C}^{M\times M}},$ and  $\mathbf{V}^{[i,k]}_{\mathrm{S}\rightarrow\mathrm{UE}}\in{\mathbb{C}^{M\times M}}$
 are unitary matrices. $\mathbf{\Lambda}^{[i]}_{\mathrm{B}\rightarrow\mathrm{S}}$ and $\mathbf{\Lambda}^{[i,k]}_{\mathrm{S}\rightarrow\mathrm{UE}}$ and  are $ M \times  N$ and  $ M \times M$ matrices with main diagonal
 element vectors $\Big[\sqrt{\gamma_{\mathrm{B}\rightarrow\mathrm{S},1}^{[i]}}\, \sqrt{\gamma_{\mathrm{B}\rightarrow\mathrm{S},2}^{[i]}}\,\ldots\sqrt{\gamma_{\mathrm{B}\rightarrow\mathrm{S},R_1}^{[i]}}\Big]$
  and  $ \Big[\sqrt{\gamma_{\mathrm{S}\rightarrow\mathrm{UE},1}^{[i,k]}}\,
  \sqrt{\gamma_{\mathrm{S}\rightarrow\mathrm{UE},2}^{[i,k]}}\,$ $\ldots\,
  \sqrt{\gamma_{\mathrm{S}\rightarrow\mathrm{UE},R_2}^{[i,k]}}\Big]$, respectively, and all other elements equal to zero. Subscript indices $R_1=\Rank(\mathbf{H}^{[i]}_{\mathrm{B}\rightarrow\mathrm{S}})$ and $R_2=\Rank(\mathbf{H}^{[i,k]}_{\mathrm{S}\rightarrow\mathrm{UE}})$ denote the rank of matrices $\mathbf{H}^{[i]}_{\mathrm{B}\rightarrow\mathrm{S}}$ and  $\mathbf{H}^{[i,k]}_{\mathrm{S}\rightarrow\mathrm{UE}}$, respectively.    Variables $\gamma_{\mathrm{B}\rightarrow\mathrm{S},n}^{[i]}$ and $\gamma_{\mathrm{S}\rightarrow\mathrm{UE},n}^{[i,k]}$ represent the equivalent channel-to-noise ratio (CNR) on spatial
   channel $n$ in subcarrier $i$ of the BS-to-SUDAS channel and the SUDAS-to-UE $k$ channel, respectively. Similarly, we can exploit channel reciprocity and apply SVD to the UL two-hop channel matrices which yields
    \begin{align}\label{eqn:SVD_HSR_HRD_UL}
 \mathbf{H}^{[i]}_{\mathrm{S}\rightarrow\mathrm{B}}=\mathbf{V}^{[i]}_{\mathrm{B}\rightarrow\mathrm{S}}
 (\mathbf{\Lambda}^{[i]}_{\mathrm{B}\rightarrow\mathrm{S}})^H(\mathbf{U}^{[i]}_{\mathrm{B}\rightarrow\mathrm{S}})^H \quad\mbox{and}\quad \mathbf{H}^{[i,k]}_{\mathrm{UE}\rightarrow\mathrm{S}}=\mathbf{V}^{[i,k]}_{\mathrm{S}\rightarrow\mathrm{UE}} (\mathbf{\Lambda}^{[i,k]}_{\mathrm{S}\rightarrow\mathrm{UE}} )^H(\mathbf{U}^{[i,k]}_{\mathrm{S}\rightarrow\mathrm{UE}})^H. \end{align}
 We are now ready to introduce the following theorem.
\begin{Thm}\label{Thm:Diagonalization_optimal}
Assuming that
$\Rank(\mathbf{P}^{[i,k]}_{\mathrm{DL}})=\Rank(\mathbf{P}^{[i,k]}_{\mathrm{UL}})=\Rank(\mathbf{F}^{[i,k]}_{\mathrm{DL}})=\Rank(\mathbf{F}^{[i,k]}_{\mathrm{UL}})=N_{\mathrm{S}}\le
 \min\{\Rank(\mathbf{H}^{[i,k]}_{\mathrm{S}\rightarrow\mathrm{UE}}),\Rank(\mathbf{H}^{[i]}_{\mathrm{B}\rightarrow\mathrm{S}})\}$, the optimal linear
  precoding matrices used at the BS and the SUDACs for the maximization problem in \eqref{eqn:inner_loop} jointly diagonalize the DL and UL channels of the  BS-SUDAS-UE link on each subcarrier, despite the non-convexity of the objective function\footnote{We note that the diagonal structure is also optimal for frequency division duplex systems where $\mathbf{H}^{[i,k]}_{\mathrm{UE}\rightarrow\mathrm{S}}\neq(\mathbf{H}^{[i,k]}_{\mathrm{S}\rightarrow\mathrm{UE}})^H$ and $\mathbf{H}^{[i]}_{\mathrm{S}\rightarrow\mathrm{B}}\neq(\mathbf{H}^{[i]}_{\mathrm{B}\rightarrow\mathrm{S}})^H$. Only  the  optimal precoding matrices in \eqref{eqn:matrix_P} and \eqref{eqn:matrix_F} will  change accordingly to jointly diagonalize the end-to-end channel.}.  The optimal precoding matrices have the following structure:
\begin{eqnarray}\label{eqn:matrix_P}
\mathbf{P}^{[i,k]}_{\mathrm{DL}}&=&\mathbf{\widetilde V}^{[i]}_{\mathrm{B}\rightarrow\mathrm{S}}\mathbf{\Lambda}^{[i,k]}_{\mathrm{B}_\mathrm{DL}},
\quad \mbox{  }\quad\quad \,\,\,\,
\mathbf{F}^{[i,k]}_{\mathrm{DL}}=\mathbf{\widetilde V}^{[i,k]}_{\mathrm{S}\rightarrow\mathrm{UE}}\mathbf{\Lambda}^{[i,k]}_{\mathrm{F}_\mathrm{DL}}(\mathbf{\widetilde U}^{[i,k]}_{\mathrm{B}\rightarrow\mathrm{S}})^H,\\  \label{eqn:matrix_F}
\mathbf{P}^{[i,k]}_{\mathrm{UL}}&=&\mathbf{\widetilde U}^{[i]}_{\mathrm{S}\rightarrow\mathrm{UE}}\mathbf{\Lambda}^{[i,k]}_{\mathrm{UE}_\mathrm{UL}},
\,\,\, \mbox{and}\quad
\mathbf{F}^{[i,k]}_{\mathrm{UL}}=\mathbf{\widetilde U}^{[i,k]}_{\mathrm{B}\rightarrow\mathrm{S}}\mathbf{\Lambda}^{[i,k]}_{\mathrm{F}_\mathrm{UL}}(\mathbf{\widetilde V}^{[i,k]}_{\mathrm{S}\rightarrow\mathrm{UE}})^H,
      \end{eqnarray}
respectively, where $\mathbf{\widetilde V}^{[i]}_{\mathrm{B}\rightarrow\mathrm{S}}$, $\mathbf{\widetilde V}^{[i,k]}_{\mathrm{S}\rightarrow\mathrm{UE}}$, and $\mathbf{\widetilde U}^{[i,k]}_{\mathrm{B}\rightarrow\mathrm{S}}$ are the $N_{\mathrm{S}}$ rightmost columns of $\mathbf{V}^{[i]}_{\mathrm{B}\rightarrow\mathrm{S}}$, $\mathbf{ V}^{[i,k]}_{\mathrm{S}\rightarrow\mathrm{UE}}$, and $\mathbf{ U}^{[i,k]}_{\mathrm{B}\rightarrow\mathrm{S}}$, respectively.   Matrices $\mathbf{\Lambda}^{[i,k]}_{\mathrm{B}_\mathrm{DL}}\in
\mathbb{C}^{N_{\mathrm{S}}\times N_{\mathrm{S}}}$, $\mathbf{\Lambda}^{[i,k]}_{\mathrm{F}_\mathrm{DL}}\in
\mathbb{C}^{N_{\mathrm{S}} \times N_{\mathrm{S}}}$,  $\mathbf{\Lambda}^{[i,k]}_{\mathrm{B}_\mathrm{UL}}\in
\mathbb{C}^{N_{\mathrm{S}}\times N_{\mathrm{S}}}$, and  $\mathbf{\Lambda}^{[i,k]}_{\mathrm{F}_\mathrm{UL}}\in
\mathbb{C}^{N_{\mathrm{S}}\times N_{\mathrm{S}}}$ are diagonal matrices which can be expressed as
\begin{eqnarray}
\mathbf{\Lambda}^{[i,k]}_{\mathrm{B}_\mathrm{DL}}&=&\diag\Big(\sqrt{P_{\mathrm{B}\rightarrow\mathrm{S},1}^{[i,k]}}
\,\ldots\,\sqrt{P_{\mathrm{B}\rightarrow\mathrm{S},n}^{[i,k]}}\,\ldots\,\sqrt{P_{\mathrm{B}\rightarrow\mathrm{S},N_{\mathrm{S}}}^{[i,k]}}\Big),\\
\mathbf{\Lambda}^{[i,k]}_{\mathrm{F}_\mathrm{DL}}&=&\diag\Big(\sqrt{P_{\mathrm{S}\rightarrow\mathrm{UE},1}^{[i,k]}}
 \,\ldots\,\sqrt{P_{\mathrm{S}\rightarrow\mathrm{UE},n}^{[i,k]}}\,\ldots\,
\sqrt{P_{\mathrm{S}\rightarrow\mathrm{UE},N_{\mathrm{S}}}^{[i,k]}}\Big),\\
\mathbf{\Lambda}^{[i,k]}_{\mathrm{UE}_\mathrm{UL}}&=&\diag\Big(\sqrt{P_{\mathrm{UE}\rightarrow\mathrm{S},1}^{[i,k]}}
\,\ldots\,\sqrt{P_{\mathrm{UE}\rightarrow\mathrm{S},n}^{[i,k]}}\,\ldots\,\sqrt{P_{\mathrm{UE}\rightarrow\mathrm{S},N_{\mathrm{S}}}^{[i,k]}}\Big), \mbox{and}\\
\mathbf{\Lambda}^{[i,k]}_{\mathrm{F}_\mathrm{UL}}&=&\diag\Big(\sqrt{P_{\mathrm{S}\rightarrow\mathrm{B},1}^{[i,k]}}
\,\ldots\,\sqrt{P_{\mathrm{S}\rightarrow\mathrm{B},n}^{[i,k]}}\,\ldots\,\sqrt{P_{\mathrm{S}\rightarrow\mathrm{B},N_{\mathrm{S}}}^{[i,k]}}\Big),
\end{eqnarray}
 respectively,  where  $\diag(x_1, \cdots, x_K)$ denotes a diagonal matrix with the diagonal elements $\{x_1, \cdots, $ $ x_K\}$. Scalar optimization variables $P_{\mathrm{B}\rightarrow\mathrm{S},n}^{[i,k]}$, $P_{\mathrm{S}\rightarrow\mathrm{UE},n}^{[i,k]}$,  $P_{\mathrm{UE}\rightarrow\mathrm{S},n}^{[i,k]}$, and $P_{\mathrm{S}\rightarrow\mathrm{B},n}^{[i,k]}$ are, respectively, the equivalent
transmit powers of the BS-to-SUDAS link, the SUDAS-to-UE link, the UE-to-SUDAS link, and the SUDAS-to-BS link  for UE $k$ on spatial channel $n$ and subcarrier $i$.
\end{Thm}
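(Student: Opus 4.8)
The plan is to prove the diagonal structure one link at a time and then reassemble the full solution. First I would fix the subcarrier-assignment and time-sharing variables ${\cal S}$, so that the inner-loop objective ${\cal U}({\cal P},{\cal S})-\eta_{\mathrm{eff}}{\cal U}_{\mathrm{TP}}({\cal P},{\cal S})$ splits into a sum of independent per-$(i,k)$ terms for the DL and per-$(i,k)$ terms for the UL: each rate $R^{[i,k]}_{\mathrm{DL}}$ and each power term $\Tr(\mathbf{P}^{[i,k]}_{\mathrm{DL}}(\mathbf{P}^{[i,k]}_{\mathrm{DL}})^H)$, $\Tr(\mathbf{G}^{[i,k]}_{\mathrm{DL}})$ depends only on that link's precoders. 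Because the diagonalizing replacement constructed below preserves each link's transmit powers, the coupling sum-power constraints C1--C4 stay satisfied, so it suffices to establish the claimed structure link by link, for a single DL link that maximizes $R^{[i,k]}_{\mathrm{DL}}-\mu_1\Tr(\mathbf{P}^{[i,k]}_{\mathrm{DL}}(\mathbf{P}^{[i,k]}_{\mathrm{DL}})^H)-\mu_2\Tr(\mathbf{G}^{[i,k]}_{\mathrm{DL}})$ with nonnegative weights $\mu_1,\mu_2$ absorbing $\eta_{\mathrm{eff}}$, the amplifier factors $\varepsilon_{\mathrm{B}},\varepsilon_{\mathrm{S}}$, and the multipliers of C1--C4, and analogously for a single UL link.

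For a single DL link I would use the rate form $R^{[i,k]}_{\mathrm{DL}}=\log_2\det[\mathbf{I}_{N_{\mathrm{S}}}+(\mathbf{\Gamma}^{[i,k]}_{\mathrm{DL}})^H(\mathbf{\Theta}^{[i,k]}_{\mathrm{DL}})^{-1}\mathbf{\Gamma}^{[i,k]}_{\mathrm{DL}}]$ and recognize the cascade $\mathbf{\Gamma}^{[i,k]}_{\mathrm{DL}}=\mathbf{H}^{[i,k]}_{\mathrm{S}\rightarrow\mathrm{UE}}\mathbf{F}^{[i,k]}_{\mathrm{DL}}\mathbf{H}^{[i]}_{\mathrm{B}\rightarrow\mathrm{S}}\mathbf{P}^{[i,k]}_{\mathrm{DL}}$ with amplified-noise covariance $\mathbf{\Theta}^{[i,k]}_{\mathrm{DL}}$ from \eqref{eqn:AF-FD-equivalent_noise} as a two-hop amplify-and-forward MIMO channel. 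Substituting the channel SVDs of \eqref{eqn:SVD_HSR_HRD} and writing $\mathbf{A}=\mathbf{H}^{[i,k]}_{\mathrm{S}\rightarrow\mathrm{UE}}\mathbf{F}^{[i,k]}_{\mathrm{DL}}$, the effective signal-to-noise matrix becomes $(\mathbf{P}^{[i,k]}_{\mathrm{DL}})^H(\mathbf{H}^{[i]}_{\mathrm{B}\rightarrow\mathrm{S}})^H\mathbf{A}^H(\mathbf{A}\mathbf{A}^H+\mathbf{I}_M)^{-1}\mathbf{A}\mathbf{H}^{[i]}_{\mathrm{B}\rightarrow\mathrm{S}}\mathbf{P}^{[i,k]}_{\mathrm{DL}}$. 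The heart of the argument is the majorization/diagonalization result for such cascades \cite{JR:Yue_Rong_diagonalization}: for any fixed set of singular values (the power loadings), Hadamard's inequality $\det[\mathbf{I}+\mathbf{X}]\le\prod_n(1+[\mathbf{X}]_{nn})$, tight exactly when $\mathbf{X}$ is diagonal, shows the rate is maximized when the unitary factors of $\mathbf{P}^{[i,k]}_{\mathrm{DL}}$ and $\mathbf{F}^{[i,k]}_{\mathrm{DL}}$ are matched to the singular vectors of the two hops so that $(\mathbf{\Gamma}^{[i,k]}_{\mathrm{DL}})^H(\mathbf{\Theta}^{[i,k]}_{\mathrm{DL}})^{-1}\mathbf{\Gamma}^{[i,k]}_{\mathrm{DL}}$ is diagonal. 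Under this same alignment the two power terms collapse to functions of the loadings alone: $\Tr(\mathbf{P}^{[i,k]}_{\mathrm{DL}}(\mathbf{P}^{[i,k]}_{\mathrm{DL}})^H)=\Tr(\mathbf{\Lambda}^{[i,k]}_{\mathrm{B}_{\mathrm{DL}}}(\mathbf{\Lambda}^{[i,k]}_{\mathrm{B}_{\mathrm{DL}}})^H)$ since $\mathbf{\widetilde V}^{[i]}_{\mathrm{B}\rightarrow\mathrm{S}}$ has orthonormal columns, and $\mathbf{G}^{[i,k]}_{\mathrm{DL}}$ in \eqref{eqn:G1} becomes diagonal with a trace depending only on the loadings, so diagonalizing costs no rate and no extra power.

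The decisive and hardest step is proving \emph{joint} optimality over $\mathbf{P}^{[i,k]}_{\mathrm{DL}}$ and $\mathbf{F}^{[i,k]}_{\mathrm{DL}}$ together, rather than optimizing one with the other frozen: $\mathbf{F}^{[i,k]}_{\mathrm{DL}}$ enters both the useful channel $\mathbf{\Gamma}^{[i,k]}_{\mathrm{DL}}$ and the noise covariance $\mathbf{\Theta}^{[i,k]}_{\mathrm{DL}}$, and it is coupled to $\mathbf{P}^{[i,k]}_{\mathrm{DL}}$ through the term $\mathbf{H}^{[i]}_{\mathrm{B}\rightarrow\mathrm{S}}\mathbf{P}^{[i,k]}_{\mathrm{DL}}(\mathbf{P}^{[i,k]}_{\mathrm{DL}})^H(\mathbf{H}^{[i]}_{\mathrm{B}\rightarrow\mathrm{S}})^H$ inside $\mathbf{G}^{[i,k]}_{\mathrm{DL}}$. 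I would resolve this by a Pareto-improvement argument: starting from an arbitrary feasible pair, simultaneously rotating $\mathbf{F}^{[i,k]}_{\mathrm{DL}}$ onto the singular vectors of $\mathbf{H}^{[i,k]}_{\mathrm{S}\rightarrow\mathrm{UE}}$ and $\mathbf{H}^{[i]}_{\mathrm{B}\rightarrow\mathrm{S}}$ and $\mathbf{P}^{[i,k]}_{\mathrm{DL}}$ onto the right singular vectors of $\mathbf{H}^{[i]}_{\mathrm{B}\rightarrow\mathrm{S}}$ (i) does not decrease the $x\mapsto x/(x+1)$-filtered modal gains that enter the rate, via a majorization ordering that pairs the strongest first-hop modes with the strongest second-hop modes, and (ii) does not increase either power term. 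Because the link objective is rate minus nonnegative multiples of power, this produces a diagonal-structured point at least as good as the original, so an optimum with the claimed structure exists despite the non-convexity. The UL structure then follows by applying the same argument to the reciprocal channels $\mathbf{H}^{[i,k]}_{\mathrm{UE}\rightarrow\mathrm{S}}=(\mathbf{H}^{[i,k]}_{\mathrm{S}\rightarrow\mathrm{UE}})^H$ and $\mathbf{H}^{[i]}_{\mathrm{S}\rightarrow\mathrm{B}}=(\mathbf{H}^{[i]}_{\mathrm{B}\rightarrow\mathrm{S}})^H$ of \eqref{eqn:SVD_HSR_HRD_UL}, which swaps the roles of left and right singular vectors and yields \eqref{eqn:matrix_F}; the hypothesis $N_{\mathrm{S}}\le\min\{R_1,R_2\}$ guarantees that the $N_{\mathrm{S}}$ selected directions forming $\mathbf{\widetilde V}^{[i]}_{\mathrm{B}\rightarrow\mathrm{S}}$, $\mathbf{\widetilde V}^{[i,k]}_{\mathrm{S}\rightarrow\mathrm{UE}}$, and $\mathbf{\widetilde U}^{[i,k]}_{\mathrm{B}\rightarrow\mathrm{S}}$ all carry nonzero singular values, so the end-to-end channel supports $N_{\mathrm{S}}$ parallel streams.
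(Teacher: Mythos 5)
Your proposal is correct and takes essentially the same route as the paper's appendix: a per-subcarrier, per-link reduction of the subtractive-form objective, followed by the majorization-theoretic diagonalization argument of \cite{JR:Yue_Rong_diagonalization} applied simultaneously to the rate terms and to the trace power terms (in both the objective and constraints C1--C4), and concluding with an SVD-based verification of the claimed structure. The differences are only presentational: the paper invokes Schur-concavity of $f_1({\cal P},{\cal S})$, $f_2({\cal P},{\cal S})$, and the constraint functions abstractly (citing \cite{JR:Yue_Rong_diagonalization} and \cite{book:majorization}), whereas you make the same mechanism explicit via Hadamard's inequality and a feasibility-preserving Pareto-improvement rotation, which is exactly how those cited Schur-concavity results are themselves established.
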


\begin{proof}
Please refer to the Appendix.
\end{proof}

By adopting the optimal precoding matrices provided in Theorem \ref{Thm:Diagonalization_optimal}, the DL and UL end-to-end channel on subcarrier $i$ is converted into $N_{\mathrm{S}}$ parallel spatial channels. More importantly,  the structure of the optimal precoding matrices simplifies the resource allocation algorithm design significantly as the matrix optimization  variables can be replaced by equivalent scalar  optimization variables. As a result, the achievable rates in DL and UL on subcarrier $i$ from the BS to UE $k$ via the SUDAS in \eqref{eqn:cap_log_det}  can
be simplified as
\begin{eqnarray}\label{eqn:R1}
 \hspace*{-3mm}R^{[i,k]}_{\mathrm{DL}}\hspace*{-3mm}&=&\hspace*{-3mm}\sum_{n=1}^{N_{\mathrm{S}}}
 \log_2\Big(1+\mathrm{SINR}^{[i,k]}_{\mathrm{DL}_n}\Big),\,\,
\mathrm{SINR}^{[i,k]}_{\mathrm{DL}_n}=
\frac{\gamma_{\mathrm{B}\rightarrow\mathrm{S},n}^{[i]}P_{\mathrm{B}\rightarrow\mathrm{S},n}^{[i,k]}
P_{\mathrm{S}\rightarrow\mathrm{UE},n}^{[i,k]}\gamma_{\mathrm{S}\rightarrow\mathrm{UE},n}^{[i,k]}}{1+\gamma_{\mathrm{B}\rightarrow\mathrm{S},n}^{[i]}P_{\mathrm{B}\rightarrow\mathrm{S},n}^{[i,k]}
+ P_{\mathrm{S}\rightarrow\mathrm{UE},n}^{[i,k]}\gamma_{\mathrm{S}\rightarrow\mathrm{UE},n}^{[i,k]}},\\
\hspace*{-3mm} R^{[i,k]}_{\mathrm{UL}}\hspace*{-3mm}&=&\hspace*{-3mm}\sum_{n=1}^{N_{\mathrm{S}}}\log_2\Big(1+\mathrm{SINR}^{[i,k]}_{\mathrm{UL}_n}\Big),\,\,
\mathrm{SINR}^{[i,k]}_{\mathrm{UL}_n}=
\frac{\gamma_{\mathrm{S}\rightarrow\mathrm{B},n}^{[i]}P_{\mathrm{S}\rightarrow\mathrm{B},n}^{[i,k]}
P_{\mathrm{UE}\rightarrow\mathrm{S},n}^{[i,k]}\gamma_{\mathrm{UE}\rightarrow\mathrm{S},n}^{[i,k]}}{1+\gamma_{\mathrm{S}\rightarrow\mathrm{B},n}^{[i]}P_{\mathrm{S}\rightarrow\mathrm{B},n}^{[i,k]}
+ P_{\mathrm{UE}\rightarrow\mathrm{S},n}^{[i,k]}\gamma_{\mathrm{UE}\rightarrow\mathrm{S},n}^{[i,k]}},\label{eqn:R2}
\end{eqnarray}
where $\mathrm{SINR}^{[i,k]}_{\mathrm{DL}_n}$ and $\mathrm{SINR}^{[i,k]}_{\mathrm{UL}_n}$ are the received signal-to-interference-plus-noise-ratios (SINRs) at UE $k$ and the BS in subcarrier $i$ in spatial subchannel $n\in\{1,\ldots, N_{\mathrm{S}}\}$, respectively.  Although the objective function is now a scalar function with respect  to the optimization variables, it is still non-convex.  To obtain a tractable resource allocation algorithm design, we propose the following objective function approximation. In particular, the end-to-end DL and UL SINRs on subcarrier $i$ for UE $k$ can be approximated, respectively,  as
\begin{eqnarray}\label{eqn:SINR_1}
\mathrm{SINR}^{[i,k]}_{\mathrm{DL}_n}\hspace*{-1mm} &\approx&\hspace*{-1mm}\mathrm{\overline{SINR}}^{[i,k]}_{\mathrm{DL}_n}\quad \mbox{and}\quad \mathrm{SINR}^{[i,k]}_{\mathrm{UL}_n}\hspace*{-1mm} \approx\mathrm{\overline{SINR}}^{[i,k]}_{\mathrm{UL}_n},\quad \mbox{where}\\
\mathrm{\overline{SINR}}^{[i,k]}_{\mathrm{DL}_n}\hspace*{-1mm}&=&\hspace*{-1mm}\frac{\gamma_{\mathrm{B}\rightarrow\mathrm{S},n}^{[i]}P_{\mathrm{B}\rightarrow\mathrm{S},n}^{[i,k]}
P_{\mathrm{S}\rightarrow\mathrm{UE},n}^{[i,k]}\gamma_{\mathrm{S}\rightarrow\mathrm{UE},n}^{[i,k]}}{\gamma_{\mathrm{B}\rightarrow\mathrm{S},n}^{[i]}P_{\mathrm{B}\rightarrow\mathrm{S},n}^{[i,k]}
+ P_{\mathrm{S}\rightarrow\mathrm{UE},n}^{[i,k]}\gamma_{\mathrm{S}\rightarrow\mathrm{UE},n}^{[i,k]}},\,\, \mathrm{\overline{SINR}}^{[i,k]}_{\mathrm{UL}_n}\hspace*{-1mm}=\hspace*{-1mm}
\notag\frac{\gamma_{\mathrm{S}\rightarrow\mathrm{B},n}^{[i]}P_{\mathrm{S}\rightarrow\mathrm{B},n}^{[i,k]}
P_{\mathrm{UE}\rightarrow\mathrm{S},n}^{[i,k]}\gamma_{\mathrm{UE}\rightarrow\mathrm{S},n}^{[i,k]}}{\gamma_{\mathrm{S}\rightarrow\mathrm{B},n}^{[i]}P_{\mathrm{S}\rightarrow\mathrm{B},n}^{[i,k]}
+ P_{\mathrm{UE}\rightarrow\mathrm{S},n}^{[i,k]}\gamma_{\mathrm{UE}\rightarrow\mathrm{S},n}^{[i,k]}}\label{eqn:SINR_2}.
\end{eqnarray}
We note that this approximation is asymptotically tight for high SNR\footnote{It is expected that the high SNR assumption holds for the considered system due the short distance communication between the SUDAS and the UEs, i.e., $P_{\mathrm{S}\rightarrow\mathrm{UE},n}^{[i,k]}\gamma_{\mathrm{S}\rightarrow\mathrm{UE},n}^{[i,k]},
P_{\mathrm{UE}\rightarrow\mathrm{S},n}^{[i,k]}\gamma_{\mathrm{UE}\rightarrow\mathrm{S},n}^{[i,k]} \gg 1$. } \cite{JR:kwan_AF_relay,JR:MIMO_HD_relay1}.

The next step is to tackle the non-convexity due to combinatorial constraints C9 and C10 in (\ref{eqn:cross-layer-formulation}). To this end, we adopt the time-sharing relaxation approach. In particular, we relax $ s^{[i,k]}_{\mathrm{DL}}$ and $s^{[i,k]}_{\mathrm{UL}}$ in constraints C9 and C10 such that they are non-negative real valued optimization variables bounded from above by $\alpha$ and $\beta$, respectively   \cite{JR:Time_sharing_wei_yu}, i.e., $0\le  s^{[i,k]}_{\mathrm{DL}}\le \alpha$ and  $0\le  s^{[i,k]}_{\mathrm{UL}}\le \beta$. It has been shown in \cite{JR:Time_sharing_wei_yu} that the time-sharing relaxation is asymptotically optimal for a sufficiently large number of subcarriers\footnote{The duality gap due to the time-sharing relaxation  is virtually zero for practical  numbers of subcarriers, e.g. $n_{\mathrm{F}}\geq 8$
\cite{CN:large_subcarriers}. }. Next, we define a set with four auxiliary optimization variables  ${\cal \widetilde P}=\{\widetilde P_{\mathrm{B}\rightarrow\mathrm{S},n}^{[i,k]},\widetilde P_{\mathrm{UE}\rightarrow\mathrm{S},n}^{[i,k]},$ $\widetilde P_{\mathrm{S}\rightarrow\mathrm{UE},n}^{[i,k]},\widetilde P_{\mathrm{S}\rightarrow\mathrm{B},n}^{[i,k]} \}$ and rewrite the transformed objective function in \eqref{eqn:inner_loop} as:
\begin{eqnarray}\vspace*{-2mm}
&&\hspace*{-6mm}{\cal U}_{\mathrm{Trans}}({\cal \widetilde P},{\cal
S})=\sum_{k=1}^K\sum_{i=1}^{n_{\mathrm{F}}}\sum_{n=1}^{N_{\mathrm{S}}} \Bigg\{s^{[i,k]}_{\mathrm{DL}}
\log_2\Big(1+\frac{\widetilde{\mathrm{{SINR}}}^{[i,k]}_{\mathrm{DL}_n}}{s^{[i,k]}_{\mathrm{DL}}}\Big)+
s^{[i,k]}_{\mathrm{UL}}\log_2\Big(1+\frac{\widetilde{\mathrm{{SINR}}}^{[i,k]}_{\mathrm{UL}_n}}{s^{[i,k]}_{\mathrm{UL}}}\Big)\Bigg\} \\
&&\hspace*{-6mm}-\eta_{\mathrm{eff}}\Big(P_{\mathrm{C}_{\mathrm{B}}}\hspace*{-1mm}+\hspace*{-1mm} M P_{\mathrm{C}_{\mathrm{SUDAC}}}\hspace*{-1mm}+\hspace*{-1mm} K P_{\mathrm{C}_{\mathrm{UE}}}\hspace*{-1mm}+\hspace*{-1mm}\sum_{k=1}^{K}\sum_{i=1}^{n_{\mathrm{F}}}\sum_{n=1}^{N_{\mathrm{S}}} \varepsilon_{\mathrm{B}}\widetilde P_{\mathrm{B}\rightarrow\mathrm{S},n}^{[i,k]} \hspace*{-1mm}+\hspace*{-1mm} \varepsilon_{\mathrm{S}}\widetilde  P_{\mathrm{S}\rightarrow\mathrm{UE},n}^{[i,k]}\hspace*{-1mm}+\hspace*{-1mm}\varepsilon_{k}\widetilde P_{\mathrm{UE}\rightarrow\mathrm{S},n}^{[i,k]}\hspace*{-1mm}+\hspace*{-1mm} \varepsilon_{\mathrm{S}}\widetilde  P_{\mathrm{S}\rightarrow\mathrm{B},n}^{[i,k]}\Big)\notag \end{eqnarray}
where $\widetilde{\mathrm{SINR}}^{[i,k]}_{\mathrm{DL}_n}=\overline{\mathrm{{SINR}}}^{[i,k]}_{\mathrm{DL}_n}\Big|_{\Phi}$ and  $\Phi=\Big\{\widetilde P_{\mathrm{B}\rightarrow\mathrm{S},n}^{[i,k]}= P_{\mathrm{B}\rightarrow\mathrm{S},n}^{[i,k]} s^{[i,k]}_{\mathrm{DL}},\widetilde P_{\mathrm{S}\rightarrow\mathrm{UE},n}^{[i,k]}= P_{\mathrm{S}\rightarrow\mathrm{UE},n}^{[i,k]} s^{[i,k]}_{\mathrm{DL}},$ $ \widetilde P_{\mathrm{UE}\rightarrow\mathrm{S},n}^{[i,k]}= P_{\mathrm{UE}\rightarrow\mathrm{S},n}^{[i,k]}s^{[i,k]}_{\mathrm{UL}}  ,\widetilde P_{\mathrm{S}\rightarrow\mathrm{B},n}^{[i,k]}=P_{\mathrm{S}\rightarrow\mathrm{B},n}^{[i,k]} s^{[i,k]}_{\mathrm{UL}}       \Big \}$.  We note that the new auxiliary optimization variables in $\cal \widetilde P$ represent the actual transmit energy under the time-sharing condition.  As a result, the combinatorial-constraint relaxed  problem can be written as:\vspace*{-3mm}
\begin{eqnarray}\label{eqn:cross-layer-formulation-transformed}
&&\hspace*{10mm}\underset{{{\cal\widetilde P},}{\cal S}}\maxo\ \
{\cal U}_{\mathrm{Trans}}({\cal \widetilde P},{\cal
S}) \notag\\
\mathrm{s.t.}&\mbox{C1:}& \sum_{k=1}^{K}\sum_{i=1}^{n_{\mathrm{F}}}\sum_{n=1}^{N_{\mathrm{S}}}\widetilde P_{\mathrm{B}\rightarrow\mathrm{S},n}^{[i,k]} \le P_\mathrm{T},\notag\hspace*{12mm}
\mbox{C2:} \sum_{k=1}^{K}\sum_{i=1}^{n_{\mathrm{F}}}\sum_{n=1}^{N_{\mathrm{S}}} \widetilde  P_{\mathrm{S}\rightarrow\mathrm{UE},n}^{[i,k]}\le M P_{\max}, \notag\\
&\mbox{C3:}& \sum_{i=1}^{n_{\mathrm{F}}}\sum_{n=1}^{N_{\mathrm{S}}}\widetilde P_{\mathrm{UE}\rightarrow\mathrm{S},n}^{[i,k]} \le P_{\max_k}, \forall k,\quad\mbox{C4:} \sum_{k=1}^{K}\sum_{i=1}^{n_{\mathrm{F}}}\sum_{n=1}^{N_{\mathrm{S}}} \widetilde  P_{\mathrm{S}\rightarrow\mathrm{B},n}^{[i,k]}\le P_{\max}^{\mathrm{UL}},\notag\\
&\mbox{C5 -- C8,}& \mbox{C9: }0\leq s^{[i,k]}_{\mathrm{DL}}
\leq\alpha, \forall i,k, \hspace*{12mm}\mbox{C10: }0\leq s^{[i,k]}_{\mathrm{UL}}
\leq\beta,  \forall i,k, \mbox{ C11, C12}.
\end{eqnarray}
Optimization problem (\ref{eqn:cross-layer-formulation-transformed}) is jointly concave with respect to the auxiliary optimization variables $\cal \widetilde P$ and $\cal  S$. We note that by solving  optimization problem (\ref{eqn:cross-layer-formulation-transformed})  for $\widetilde P_{\mathrm{B}\rightarrow\mathrm{S}}^{[i,k]}$, $\widetilde P_{\mathrm{S}\rightarrow\mathrm{UE},n}^{[i,k]}$, $\widetilde P_{\mathrm{UE}\rightarrow\mathrm{S},n}^{[i,k]}$, $\widetilde P_{\mathrm{S}\rightarrow\mathrm{B},n}^{[i,k]}$, $s^{[i,k]}_{\mathrm{DL}}$, and $s^{[i,k]}_{\mathrm{UL}}$, we can
recover the solution for $ P_{\mathrm{B}\rightarrow\mathrm{S},n}^{[i,k]}$, $P_{\mathrm{S}\rightarrow\mathrm{UE},n}^{[i,k]}$, $P_{\mathrm{UE}\rightarrow\mathrm{S},n}^{[i,k]}$, and  $ P_{\mathrm{S}\rightarrow\mathrm{B},n}^{[i,k]}$.  Thus, the solution of (\ref{eqn:cross-layer-formulation-transformed}) is asymptotically optimal with respect to (\ref{eqn:cross-layer-formulation}) for high SNR and a sufficiently large number of subcarriers.

\begin{table}\caption{Iterative Resource Allocation Algorithm for SUDAS Assisted Communication}\label{table:algorithm2}\vspace*{-12mm}
\begin{algorithm} [H]                 
\caption{Alternating Optimization}          
\label{alg1}                           
\begin{algorithmic} [1]  \label{algorithm2}
\STATE Initialize the maximum number of iterations $L_{\max}$ and a small constant $\kappa\rightarrow 0$
\STATE Set iteration index $l=0$ and initialize $\{P_{\mathrm{B}\rightarrow\mathrm{S},n}^{[i,k]}(l),P_{\mathrm{S}\rightarrow\mathrm{UE},n}^{[i,k]}(l),s^{[i,k]}_{\mathrm{DL}}(l)\},
\{P_{\mathrm{S}\rightarrow\mathrm{B},n}^{[i,k]}(l),P_{\mathrm{UE}\rightarrow\mathrm{S},n}^{[i,k]}(l),$ $s^{[i,k]}_{\mathrm{UL}}(l)\}$, $\alpha$, $\beta$, and  $l=l+1$

\REPEAT [Loop]
\STATE For  given $P_{\mathrm{S}\rightarrow\mathrm{UE},n}^{[i,k]}(l-1)$ and $\alpha(l)$, solve  (\ref{eqn:cross-layer-formulation-transformed})  for $P_{\mathrm{B}\rightarrow\mathrm{S},n}^{[i,k]}$ by using (\ref{eqn:P1}) which leads to  intermediate power allocation variables $P_{\mathrm{B}\rightarrow\mathrm{S},n}^{[i,k]'}$

\STATE For given  $P_{\mathrm{B}\rightarrow\mathrm{S},n}^{[i,k]'}$ and $\alpha(l)$,  solve  (\ref{eqn:cross-layer-formulation-transformed})  for $P_{\mathrm{S}\rightarrow\mathrm{UE},n}^{[i,k]}$  via equation (\ref{eqn:P2})  which leads to  intermediate power allocation variables $P_{\mathrm{S}\rightarrow\mathrm{UE},n}^{[i,k]'}$

\STATE Update the DL  subcarrier allocation policy via \eqref{eqn:sub_selection} with $P_{\mathrm{S}\rightarrow\mathrm{UE},n}^{[i,k]}(l-1)$,  $P_{\mathrm{B}\rightarrow\mathrm{S},n}^{[i,k]'}$, and $\alpha(l)$  to obtain the intermediate  DL  subcarrier allocation policy $s^{[i,k]'}_{\mathrm{DL}}(l)$

\STATE For given  $P_{\mathrm{S}\rightarrow\mathrm{B},n}^{[i,k]}(l-1)$ and $\beta(l)$,   solve  (\ref{eqn:cross-layer-formulation-transformed})  for $P_{\mathrm{UE}\rightarrow\mathrm{S},n}^{[i,k]}$   via equation (\ref{eqn:P3})  which leads to  intermediate power allocation variables $P_{\mathrm{UE}\rightarrow\mathrm{S},n}^{[i,k]'}$

\STATE For  given $P_{\mathrm{UE}\rightarrow\mathrm{S},n}^{[i,k]'}$ and $\beta(l)$, solve  (\ref{eqn:cross-layer-formulation-transformed})  for $P_{\mathrm{S}\rightarrow\mathrm{B},n}^{[i,k]}$ by using (\ref{eqn:P4}) which leads to  intermediate power allocation variables $P_{\mathrm{S}\rightarrow\mathrm{B},n}^{[i,k]'}$

\STATE Update the UL  subcarrier allocation policy via \eqref{eqn:sub_selection2} with $P_{\mathrm{S}\rightarrow\mathrm{B},n}^{[i,k]}(l-1)$, $P_{\mathrm{UE}\rightarrow\mathrm{S},n}^{[i,k]'}$, and $\beta(l)$  to obtain the intermediate  UL  subcarrier allocation policy $s^{[i,k]'}_{\mathrm{UL}}(l)$

\STATE Update $\alpha$ and $\beta$ via standard linear programming methods   to obtain  intermediate solutions of $\alpha'$ and $\beta'$

\IF{$\abs{P_{\mathrm{S}\rightarrow\mathrm{UE},n}^{[i,k]'}-P_{\mathrm{S}\rightarrow\mathrm{UE},n}^{[i,k]}(l-1)}\le \kappa$ , $\abs{P_{\mathrm{B}\rightarrow\mathrm{S},n}^{[i,k]'}-P_{\mathrm{B}\rightarrow\mathrm{S},n}^{[i,k]}(l-1)}\le \kappa$ , $\abs{s^{[i,k]'}_{\mathrm{DL}}-s^{[i,k]}_{\mathrm{DL}}(l-1)}\le \kappa$, \\
$\abs{P_{\mathrm{UE}\rightarrow\mathrm{S},n}^{[i,k]'}-P_{\mathrm{UE}\rightarrow\mathrm{S},n}^{[i,k]}(l-1)}\le \kappa$ , $\abs{P_{\mathrm{S}\rightarrow\mathrm{B},n}^{[i,k]'}-P_{\mathrm{S}\rightarrow\mathrm{B},n}^{[i,k]}(l-1)}\le \kappa$ , $\abs{s^{[i,k]'}_{\mathrm{UL}}-s^{[i,k]}_{\mathrm{UL}}(l-1)}\le \kappa$\\
$\abs{\alpha'-\alpha(l-1)}\le \kappa$, and $\abs{\beta'-\beta(l-1)}\le \kappa$}
\STATE
Convergence = \TRUE, \quad\textbf{return}
$\{P_{\mathrm{S}\rightarrow\mathrm{UE},n}^{[i,k]'},P_{\mathrm{B}\rightarrow\mathrm{S},n}^{[i,k]'},s^{[i,k]'}_{\mathrm{DL}},
P_{\mathrm{UE}\rightarrow\mathrm{S},n}^{[i,k]'},P_{\mathrm{S}\rightarrow\mathrm{B},n}^{[i,k]'},s^{[i,k]'}_{\mathrm{UL}},\alpha',\beta'\}$
\ELSE
\STATE Convergence = \FALSE, $P_{\mathrm{S}\rightarrow\mathrm{UE},n}^{[i,k]}(l)=P_{\mathrm{S}\rightarrow\mathrm{UE},n}^{[i,k]'},
P_{\mathrm{B}\rightarrow\mathrm{S},n}^{[i,k]}(l)=P_{\mathrm{B}\rightarrow\mathrm{S},n}^{[i,k]'},
s^{[i,k]}_{\mathrm{DL}}(l)=s^{[i,k]'}_{\mathrm{DL}},
P_{\mathrm{UE}\rightarrow\mathrm{S},n}^{[i,k]}(l)=P_{\mathrm{UE}\rightarrow\mathrm{S},n}^{[i,k]'},
P_{\mathrm{S}\rightarrow\mathrm{B},n}^{[i,k]}(l)=P_{\mathrm{S}\rightarrow\mathrm{B},n}^{[i,k]'},
s^{[i,k]}_{\mathrm{UL}}(l)=s^{[i,k]'}_{\mathrm{UL}},\alpha(l)=\alpha',\beta(l)=\beta'$, $l=l+1$
 \ENDIF
\UNTIL{ $l=L_{\max}$}
\end{algorithmic}
\end{algorithm}\vspace*{-18mm}
\end{table}

Now, we propose an algorithm for solving the transformed problem in (\ref{eqn:cross-layer-formulation-transformed}). Although the transformed problem is jointly concave with respect to the optimization variables and can be solved by  convex programming solvers, it is difficult to obtain system design insight from a numerical solution. This  motivates us to design an iterative resource allocation algorithm which reveals the structure of energy-efficient resource allocation solutions and serves as a building block for the suboptimal algorithm proposed in the next section. The proposed iterative resource allocation algorithm is based on alternating  optimization. The algorithm is summarized in Table \ref{table:algorithm2} and is implemented by a repeated loop.  In line 2,  we first set the iteration index $l$ to zero
and initialize the resource allocation policy. Variables $P_{\mathrm{B}\rightarrow\mathrm{S},n}^{[i,k]}(l),P_{\mathrm{S}\rightarrow\mathrm{UE},n}^{[i,k]}(l),
P_{\mathrm{S}\rightarrow\mathrm{B},n}^{[i,k]}(l),P_{\mathrm{UE}\rightarrow\mathrm{S},n}^{[i,k]}(l)$, $s^{[i,k]}_{\mathrm{DL}}(l)$, $s^{[i,k]}_{\mathrm{UL}}(l)$, $\alpha(l)$, and $\beta(l)$ denote the resource allocation policy in the $l$-th iteration. Then, in each iteration, we solve  (\ref{eqn:cross-layer-formulation-transformed}), which leads to
\eqref{eqn:P1}--\eqref{eqn:P4}:
   \begin{eqnarray}\label{eqn:P1}
\hspace*{-4mm}P_{\mathrm{B}\rightarrow\mathrm{S},n}^{[i,k]}\hspace*{-3mm}&=&\hspace*{-2mm}\Bigg[
   \frac{\gamma_{\mathrm{S}\rightarrow\mathrm{UE},n}^{[i,k]} P_{\mathrm{S}\rightarrow\mathrm{UE},n}^{[i,k]} \Big(\Omega_{\mathrm{B}\rightarrow\mathrm{S},n}^{[i,k]}-\hspace*{-0.5mm}\gamma_{\mathrm{S}\rightarrow\mathrm{UE},n}^{[i,k]}
   P_{\mathrm{S}\rightarrow\mathrm{UE},n}^{[i,k]}\hspace*{-0.5mm}-\hspace*{-0.5mm}2\Big)}{2 (\gamma_{\mathrm{B}\rightarrow\mathrm{S},n}^{[i]} \gamma_{\mathrm{S}\rightarrow\mathrm{UE},n}^{[i,k]}
   P_{\mathrm{S}\rightarrow\mathrm{UE},n}^{[i,k]}\hspace*{-0.5mm}+\hspace*{-0.5mm}
   \gamma_{\mathrm{B}\rightarrow\mathrm{S},n}^{[i]})}\Bigg]^+,  \\
    \hspace*{-6mm}\Omega_{\mathrm{B}\rightarrow\mathrm{S},n}^{[i,k]} \hspace*{-3mm}&=&\hspace*{-3mm} \frac{\sqrt{\hspace*{-0.5mm}4 (\hspace*{-0.5mm}1\hspace*{-0.5mm}+\hspace*{-0.5mm}w^{[k]}_{\mathrm{DL}}\hspace*{-0.5mm}) \gamma_{\mathrm{B}\rightarrow\mathrm{S},n}^{[i]}(\hspace*{-0.5mm}1\hspace*{-0.5mm}+\hspace*{-0.5mm}
   \gamma_{\mathrm{S}\rightarrow\mathrm{UE},n}^{[i,k]} P_{\mathrm{S}\rightarrow\mathrm{UE},n}^{[i,k]}\hspace*{-0.5mm})\hspace*{-0.5mm}\hspace*{-0.5mm}+\hspace*{-0.5mm}\hspace*{-0.5mm}
   (\hspace*{-0.5mm}\gamma_{\mathrm{S}\rightarrow\mathrm{UE},n}^{[i,k]}\hspace*{-0.5mm})^2 (\hspace*{-0.5mm}\lambda \hspace*{-0.5mm}+\hspace*{-0.5mm}\eta_{\mathrm{eff}}\varepsilon_{\mathrm{B}}\hspace*{-0.5mm})
   (\hspace*{-0.5mm}P_{\mathrm{S}\rightarrow\mathrm{UE},n}^{[i,k]}\hspace*{-0.5mm})^2 \ln(\hspace*{-0.5mm}2\hspace*{-0.5mm})}}{\sqrt{\hspace*{-0.5mm}\lambda \hspace*{-0.5mm}+\hspace*{-0.5mm}\eta_{\mathrm{eff}}\varepsilon_{\mathrm{B}} } \sqrt{\hspace*{-0.5mm}\ln
   (\hspace*{-0.5mm}2\hspace*{-0.5mm})}},\label{eqn:omega1}
   \end{eqnarray}
with  $P_{\mathrm{S}\rightarrow\mathrm{UE},n}^{[i,k]}(l-1)$ and $\alpha(l)$ from the last iteration, where  $[x]^+=\max\{x,0\}$. Then, the obtained intermediate power allocation variable $P_{\mathrm{B}\rightarrow\mathrm{S},n}^{[i,k]}$  is used as an input for solving (\ref{eqn:cross-layer-formulation-transformed}) for $P_{\mathrm{S}\rightarrow\mathrm{UE},n}^{[i,k]}$  via the following equations:
\begin{eqnarray}
 \label{eqn:P2}\hspace*{-8mm}  P_{\mathrm{S}\rightarrow\mathrm{UE},n}^{[i,k]}\hspace*{-3mm}&=& \hspace*{-2mm}\Bigg[\frac{\gamma_{\mathrm{B}\rightarrow\mathrm{S},n}^{[i]} P_{\mathrm{B}\rightarrow\mathrm{S},n}^{[i,k]} \left( \Omega_{\mathrm{S}\rightarrow\mathrm{UE},n}^{[i,k]}-\gamma_{\mathrm{B}\rightarrow\mathrm{S},n}^{[i]} P_{\mathrm{B}\rightarrow\mathrm{S},n}^{[i,k]}-2\right)}{2 (\gamma_{\mathrm{B}\rightarrow\mathrm{S},n}^{[i]} \gamma_{\mathrm{S}\rightarrow\mathrm{UE},n}^{[i,k]} P_{\mathrm{B}\rightarrow\mathrm{S},n}^{[i,k]}+\gamma_{\mathrm{S}\rightarrow\mathrm{UE},n}^{[i,k]})}\Bigg]^+,\\
  \hspace*{-8mm}\Omega_{\mathrm{S}\rightarrow\mathrm{UE},n}^{[i,k]}\hspace*{-3mm}&=&\hspace*{-3mm}    \frac{\sqrt{\hspace*{-0.5mm}(\gamma_{\mathrm{B}\rightarrow\mathrm{S},n}^{[i]})^2 (\delta \hspace*{-0.5mm}+\hspace*{-0.5mm}\eta_{\mathrm{eff}}\varepsilon_{\mathrm{S}} ) (P_{\mathrm{B}\rightarrow\mathrm{S},n}^{[i,k]})^2 \ln(2)
 \hspace*{-0.5mm}+\hspace*{-0.5mm} (\gamma_{\mathrm{B}\rightarrow\mathrm{S},n}^{[i]} P_{\mathrm{B}\rightarrow\mathrm{S},n}^{[i,k]}\hspace*{-0.5mm}+\hspace*{-0.5mm}1) 4(1\hspace*{-0.5mm}+\hspace*{-0.5mm}w^{[k]}_{\mathrm{DL}}) \gamma_{\mathrm{S}\rightarrow\mathrm{UE},n}^{[i,k]}}}{\sqrt{\hspace*{-0.5mm}\delta \hspace*{-0.5mm}+\hspace*{-0.5mm}\eta_{\mathrm{eff}}\varepsilon_{\mathrm{S}} }
   \sqrt{\hspace*{-0.5mm}\ln (2)}}.\label{eqn:omega2}
\end{eqnarray}
\eqref{eqn:P1}--\eqref{eqn:P4} are obtained by standard convex optimization techniques. $\lambda$ and $\delta$  in \eqref{eqn:P1} and \eqref{eqn:P2}  are the Lagrange multipliers  for constraints C1 and C2 in (\ref{eqn:cross-layer-formulation-transformed}), respectively. In particular, $\lambda$ and $\delta$  are monotonically decreasing with respect to  $P_{\mathrm{B}\rightarrow\mathrm{S},n}^{[i,k]}$ and $ P_{\mathrm{S}\rightarrow\mathrm{UE},n}^{[i,k]}$, respectively, and control the transmit power at the BS and the SUDAS to satisfy constraints C1 and C2, respectively.    Besides, $w^{[k]}_{\mathrm{DL}}\geq 0$ is the Lagrange multiplier associated with the minimum required DL data rate constraint C5 for delay sensitive UE $k$.  The optimal values of $\lambda$, $\delta$, and $w^{[k]}_{\mathrm{DL}}$  in each iteration can be found by a standard gradient algorithm such that constraints C1,  C2, and C4  in (\ref{eqn:cross-layer-formulation-transformed}) are satisfied.  Variable
$\eta_{\mathrm{eff}}\ge 0$ generated by the Dinkelbach method prevents unnecessary energy expenditures by reducing the values of $\Omega_{\mathrm{B}\rightarrow\mathrm{S},n}^{[i,k]}$ and $\Omega_{\mathrm{S}\rightarrow\mathrm{UE},n}^{[i,k]}$ in \eqref{eqn:omega1} and \eqref{eqn:omega2}, respectively. Besides,   the power allocation strategy in \eqref{eqn:P1} and \eqref{eqn:P2} is  analogous to the  water-filling solution in traditional single-hop communication systems. In particular, $\Omega_{\mathrm{S}\rightarrow\mathrm{UE},n}^{[i,k]}$ and $\Omega_{\mathrm{B}\rightarrow\mathrm{S},n}^{[i,k]}$ act as water levels for controlling the allocated power. Interestingly, the water level in the power allocation for the BS-to-SUDAS link depends on the associated channel gain  which is different from the  power allocation in non-SUDAS assisted communication \cite{JR:virtual_MIMO2,CN:DAS_OFDMA}. Furthermore,  it can be seen from \eqref{eqn:omega1} and \eqref{eqn:omega2} that the water levels of
different users can be different. Specifically, if the end-to-end channel gains of two users are the same, to satisfy the data rate  requirement, the water level of the delay-sensitive
user is generally higher than
that of the non-delay sensitive
user.

After obtaining the intermediate DL power allocation policy, cf. lines 4, 5, we update the DL subcarrier allocation, cf. line 6, as:
  \begin{eqnarray}\hspace*{-3.5mm}
\label{eqn:sub_selection}s^{[i,k]}_{\mathrm{DL}}\hspace*{-2.0mm}&=&\hspace*{-2.0mm}
 \left\{ \begin{array}{rl}
 \alpha &\mbox{if $k =\arg\underset{t\in\{1,\ldots, K\}}\max \,(1+w^{[t]}_{\mathrm{DL}})\Big(\hspace*{-0.5mm}\sum_{n=1}^N\hspace*{-0.5mm}
\log_2\Big(1+\mathrm{\overline{SINR}}^{[i,t]}_{\mathrm{DL}_n}\Big)
\hspace*{-0.5mm}-\hspace*{-0.5mm}
\frac{\mathrm{\overline{SINR}}^{[i,t]}_{\mathrm{DL}_n}}{1+\mathrm{\overline{SINR}}^{[i,t]}_{\mathrm{DL}_n}}\hspace*{-0.5mm}\Big)$ }  \\
 0 &\mbox{ otherwise}
       \end{array} \right..
       \label{eqn:subcarrier_allocation}
\end{eqnarray}Here,  $\mathrm{\overline{SINR}}^{[i,k]}_{\mathrm{DL}_n}$ is obtained by substituting the intermediate solution of  $P_{\mathrm{B}\rightarrow\mathrm{S},n}^{[i,k]'}$ and $P_{\mathrm{S}\rightarrow\mathrm{UE},n}^{[i,k]'}$, i.e., (\ref{eqn:P1}) and (\ref{eqn:P2}), into (\ref{eqn:SINR_1}) in the $l$-th iteration. We note that the optimal value of $s^{[i,k]}_{\mathrm{DL}}$ of the relaxed problem
is a discrete value, cf. \eqref{eqn:sub_selection}, i.e., the constraint relaxation is tight.

Similarly, we optimize the UL power allocation variables,  $P_{\mathrm{UE}\rightarrow\mathrm{S},n}^{[i,k]}$ and $P_{\mathrm{S}\rightarrow\mathrm{B},n}^{[i,k]}$,  sequentially, cf. lines 7, 8, via the following equations:
 \begin{eqnarray}
  \label{eqn:P3}\hspace*{-4mm}  P_{\mathrm{UE}\rightarrow\mathrm{S},n}^{[i,k]}\hspace*{-3mm}&=& \hspace*{-2mm}\Bigg[\frac{\gamma_{\mathrm{S}\rightarrow\mathrm{B},n}^{[i]} P_{\mathrm{S}\rightarrow\mathrm{B},n}^{[i,k]} \left(\Omega_{\mathrm{UE}\rightarrow\mathrm{S},n}^{[i,k]}-\gamma_{\mathrm{S}\rightarrow\mathrm{B},n}^{[i]} P_{\mathrm{S}\rightarrow\mathrm{B},n}^{[i,k]}-2\right)}{2 (\gamma_{\mathrm{S}\rightarrow\mathrm{B},n}^{[i]} \gamma_{\mathrm{UE}\rightarrow\mathrm{S},n}^{[i,k]} P_{\mathrm{S}\rightarrow\mathrm{B},n}^{[i,k]}+\gamma_{\mathrm{UE}\rightarrow\mathrm{S},n}^{[i,k]})}\Bigg]^+,
  \\
   \label{eqn:P4}
\hspace*{-4mm}P_{\mathrm{S}\rightarrow\mathrm{B},n}^{[i,k]}\hspace*{-3mm}&=&\hspace*{-2mm}\Bigg[
   \frac{\gamma_{\mathrm{UE}\rightarrow\mathrm{S},n}^{[i,k]} P_{\mathrm{UE}\rightarrow\mathrm{S},n}^{[i,k]} \left(\hspace*{-0.5mm}\Omega_{\mathrm{S}\rightarrow\mathrm{B},n}^{[i,k]}\hspace*{-0.5mm}-\hspace*{-0.5mm}\gamma_{\mathrm{UE}\rightarrow\mathrm{S},n}^{[i,k]}P_{\mathrm{UE}\rightarrow\mathrm{S},n}^{[i,k]}\hspace*{-0.5mm}-\hspace*{-0.5mm}2\right)}{2 (\gamma_{\mathrm{S}\rightarrow\mathrm{B},n}^{[i]} \gamma_{\mathrm{UE}\rightarrow\mathrm{S},n}^{[i,k]}
   P_{\mathrm{UE}\rightarrow\mathrm{S},n}^{[i,k]}\hspace*{-0.5mm}+\hspace*{-0.5mm}\gamma_{\mathrm{S}\rightarrow\mathrm{B},n}^{[i]})}\Bigg]^+,
   \end{eqnarray}
   respectively, where
    \begin{eqnarray}\label{eqn:omega3}
   \hspace*{-6mm}\Omega_{\mathrm{UE}\rightarrow\mathrm{S},n}^{[i,k]}\hspace*{-3mm}&=&\hspace*{-3mm}
   \frac{\hspace*{-1.5mm}\sqrt{\hspace*{-0.5mm}(\gamma_{\mathrm{S}\rightarrow\mathrm{B},n}^{[i]})^2 ( \psi_k\hspace*{-0.5mm}+\hspace*{-0.5mm}\eta_{\mathrm{eff}}\varepsilon_{k}) (P_{\mathrm{S}\rightarrow\mathrm{B},n}^{[i,k]})^2 \ln(2)
 \hspace*{-0.5mm}+\hspace*{-0.5mm} (\gamma_{\mathrm{S}\rightarrow\mathrm{B},n}^{[i]} P_{\mathrm{S}\rightarrow\mathrm{B},n}^{[i,k]}\hspace*{-0.5mm}+\hspace*{-0.5mm}1) 4(1\hspace*{-0.5mm}+\hspace*{-0.5mm}w^{[k]}_{\mathrm{UL}}) \gamma_{\mathrm{UE}\rightarrow\mathrm{S},n}^{[i,k]}}}{\sqrt{\hspace*{-0.5mm} \psi_k\hspace*{-0.5mm}+\hspace*{-0.5mm}\eta_{\mathrm{eff}}\varepsilon_{k}  }
   \sqrt{\hspace*{-0.5mm}\ln (2)}},\\
  \hspace*{-6mm}\Omega_{\mathrm{S}\rightarrow\mathrm{B},n}^{[i,k]}\hspace*{-3mm}&=&\hspace*{-3mm}  \frac{\hspace*{-1.5mm}\sqrt{\hspace*{-0.5mm}4 (\hspace*{-0.5mm}1\hspace*{-0.5mm}+\hspace*{-0.5mm}w^{[k]}_{\mathrm{UL}}\hspace*{-0.5mm}) \gamma_{\mathrm{S}\rightarrow\mathrm{B},n}^{[i]}(\hspace*{-0.5mm}1\hspace*{-0.5mm}+\hspace*{-0.5mm}
   \gamma_{\mathrm{UE}\rightarrow\mathrm{S},n}^{[i,k]} P_{\mathrm{UE}\rightarrow\mathrm{S},n}^{[i,k]}\hspace*{-0.5mm})\hspace*{-0.5mm}\hspace*{-0.5mm}+\hspace*{-0.5mm}\hspace*{-0.5mm}(\hspace*{-0.5mm}\gamma_{\mathrm{UE}\rightarrow\mathrm{S},n}^{[i,k]}\hspace*{-0.5mm})^2 (\hspace*{-0.5mm}\phi\hspace*{-0.5mm}+\hspace*{-0.5mm}\eta_{\mathrm{eff}}\varepsilon_{\mathrm{S}}\hspace*{-0.5mm})
   (\hspace*{-0.5mm}P_{\mathrm{UE}\rightarrow\mathrm{S},n}^{[i,k]}\hspace*{-0.5mm})^2 \ln(\hspace*{-0.5mm}2\hspace*{-0.5mm})}}{\sqrt{\hspace*{-0.5mm}  \phi\hspace*{-0.5mm}+\hspace*{-0.5mm}\eta_{\mathrm{eff}}\varepsilon_{\mathrm{S}}} \sqrt{\hspace*{-0.5mm}\ln
   (\hspace*{-0.5mm}2\hspace*{-0.5mm})}}.\label{eqn:omega4}
   \end{eqnarray}
    $\psi_k$ and $\phi$  in \eqref{eqn:omega3} and \eqref{eqn:omega4}  are the Lagrange multipliers  with  respect to power consumption constraints C3 and C4 in (\ref{eqn:cross-layer-formulation-transformed}), respectively. Besides, $w^{[k]}_{\mathrm{UL}}$ is the Lagrange multiplier associated with the minimum required UL data rate constraint C6 for delay sensitive UE $k$.  The optimal values of $\psi_k$, $\phi$, and $w^{[k]}_{\mathrm{UL}}$  in each iteration can be easily obtained again with a standard gradient algorithm such that constraints C3,  C4, and C6  in (\ref{eqn:cross-layer-formulation-transformed}) are satisfied.

Then, we update the UL subcarrier allocation policy $s^{[i,k]}_{\mathrm{UL}}$ via\vspace*{-2mm}
\begin{eqnarray}\hspace*{-3.5mm}
       \label{eqn:sub_selection2}s^{[i,k]}_{\mathrm{UL}}\hspace*{-2.0mm}&=&\hspace*{-2.0mm}
 \left\{ \begin{array}{rl}
 \beta &\mbox{if $k =\arg\underset{t\in\{1,\ldots, K\}}\max \,(1+w^{[t]}_{\mathrm{UL}})\Big(\hspace*{-0.5mm}\sum_{n=1}^N\hspace*{-0.5mm}
\log_2\Big(1+\mathrm{\overline{SINR}}^{[i,t]}_{\mathrm{UL}_n}\Big)
\hspace*{-0.5mm}-\hspace*{-0.5mm}
\frac{\mathrm{\overline{SINR}}^{[i,t]}_{\mathrm{UL}_n}}{1+\mathrm{\overline{SINR}}^{[i,t]}_{\mathrm{UL}_n}}\hspace*{-0.5mm}\Big)$ }  \\
 0 &\mbox{ otherwise}
       \end{array} \right..
\end{eqnarray}
 where $\mathrm{\overline{SINR}}^{[i,k]}_{\mathrm{UL}_n}$ is obtained by substituting the intermediate solutions for  $P_{\mathrm{S}\rightarrow\mathrm{B},n}^{[i,k]'}$ and $ P_{\mathrm{UE}\rightarrow\mathrm{S},n}^{[i,k]'}$, i.e., (\ref{eqn:P3}) and (\ref{eqn:P4}), into (\ref{eqn:SINR_1}) in the $l$-th iteration. Again, the  constraint relaxation is tight.

  Subsequently, for a given UL and DL power allocation policy and given $s^{[i,k]}_{\mathrm{DL}}$ and $s^{[i,k]}_{\mathrm{UL}}$, the optimization problem is a linear programming with respect to $\alpha$ and $\beta$. Thus, we can update $\alpha$ and $\beta$ via standard linear programming methods  to obtain  intermediate solutions for $\alpha'$ and $\beta'$.  Then, the overall procedure is repeated iteratively until we reach the maximum number of iterations or convergence is achieved.  We note that for a sufficient number of iterations,  the convergence to the optimal solution of (\ref{eqn:cross-layer-formulation-transformed}) is guaranteed  since  (\ref{eqn:cross-layer-formulation-transformed}) is jointly  concave with respect to the optimization variables \cite{JR:AO}. Besides, the proposed algorithm has a polynomial time computational complexity.

\vspace*{-4mm}
\subsection{Suboptimal Solution}\vspace*{-2mm}
In the last section, we proposed an asymptotically globally optimal algorithm based on the high SNR assumption.
In this section, we propose a suboptimal resource allocation algorithm which achieves a local optimal solution of \eqref{eqn:cross-layer-formulation} for arbitrary SNR values. Similar to the  asymptotically optimal solution,
 we apply Theorem \ref{Thm:1}, Theorem \ref{Thm:Diagonalization_optimal}, and Algorithm \ref{algorithm1} to simplify the   power allocation and subcarrier allocation. In particular, the DL and UL data rates of UE $k$ on subcarrier $i$ are given by \eqref{eqn:R1} and \eqref{eqn:R2}, respectively.   It can  be observed that \eqref{eqn:R1} and \eqref{eqn:R2} are concave functions with respect to  $P_{\mathrm{B}\rightarrow\mathrm{S},n}^{[i,k]}$, $ P_{\mathrm{S}\rightarrow\mathrm{B},n}^{[i,k]}$, $P_{\mathrm{S}\rightarrow\mathrm{UE},n}^{[i,k]}$, and $P_{\mathrm{UE}\rightarrow\mathrm{S},n}^{[i,k]}$ individually, when the other variables are fixed.  Thus, we can apply alternating optimization to obtain a local optimal solution \cite{JR:AO} of \eqref{eqn:cross-layer-formulation}. We note that unlike the proposed asymptotically optimal scheme, the high SNR assumption is not required to convexify the problem.  The suboptimal solution can be  obtained by Algorithm \ref{algorithm2} in Table \ref{table:algorithm2}, but now, we update the   power allocation variables, i.e., lines 4,\,5,\,7,\,8, in Algorithm 2,  by using the following equations:
 \begin{eqnarray}
  \label{eqn:sub_opt1}\hspace*{-8mm}  P_{\mathrm{B}\rightarrow\mathrm{S},n}^{[i,k]}\hspace*{-3mm}&=&\hspace*{-3mm} \frac{1}{\gamma_{\mathrm{B}\rightarrow\mathrm{S},n}^{[i]}} \Bigg[\frac{P_{\mathrm{S}\rightarrow\mathrm{UE},n}^{[i,k]}\gamma_{\mathrm{S}\rightarrow\mathrm{UE},n}^{[i,k]}}{2}\Bigg(
  \sqrt{1+\frac{4\gamma_{\mathrm{B}\rightarrow\mathrm{S},n}^{[i]}  (1+w^{[k]}_{\mathrm{DL}})}{ \gamma_{\mathrm{S}\rightarrow\mathrm{UE},n}^{[i,k]} P_{\mathrm{S}\rightarrow\mathrm{UE},n}^{[i,k]}\ln(2)(\hspace*{-0.5mm}\lambda \hspace*{-0.5mm}+\hspace*{-0.5mm}\eta_{\mathrm{eff}}\varepsilon_{\mathrm{B}})}}-1\Bigg)-1\Bigg]^+, \\
    \label{eqn:sub_opt1_1}\hspace*{-8mm}  P_{\mathrm{S}\rightarrow\mathrm{UE},n}^{[i,k]}\hspace*{-3mm}&=&\hspace*{-3mm} \frac{1}{\gamma_{\mathrm{S}\rightarrow\mathrm{UE},n}^{[i,k]}} \Bigg[\frac{P_{\mathrm{B}\rightarrow\mathrm{S},n}^{[i,k]}\gamma_{\mathrm{B}\rightarrow\mathrm{S},n}^{[i]}}{2}\Bigg(
  \sqrt{1+\frac{4\gamma_{\mathrm{S}\rightarrow\mathrm{UE},n}^{[i,k]}  (1+w^{[k]}_{\mathrm{DL}})}{ \gamma_{\mathrm{B}\rightarrow\mathrm{S},n}^{[i]} P_{\mathrm{B}\rightarrow\mathrm{S},n}^{[i,k]}\ln(2)(\hspace*{-0.5mm}\delta \hspace*{-0.5mm}+\hspace*{-0.5mm}\eta_{\mathrm{eff}}\varepsilon_{\mathrm{S}})}}-1\Bigg)-1\Bigg]^+,     \\
\label{eqn:sub_opt2}
\hspace*{-8mm}  P_{\mathrm{S}\rightarrow\mathrm{B},n}^{[i,k]}\hspace*{-3mm}&=& \hspace*{-3mm} \frac{1}{\gamma_{\mathrm{S}\rightarrow\mathrm{B},n}^{[i]}} \Bigg[\frac{P_{\mathrm{UE}\rightarrow\mathrm{S},n}^{[i,k]}\gamma_{\mathrm{UE}\rightarrow\mathrm{S},n}^{[i,k]}}{2}\Bigg(
  \sqrt{1+\frac{4\gamma_{\mathrm{S}\rightarrow\mathrm{B},n}^{[i]}  (1+w^{[k]}_{\mathrm{UL}})}{ \gamma_{\mathrm{UE}\rightarrow\mathrm{S},n}^{[i,k]} P_{\mathrm{UE}\rightarrow\mathrm{S},n}^{[i,k]}\ln(2)(\phi\hspace*{-0.5mm}+\hspace*{-0.5mm}
  \eta_{\mathrm{eff}}\varepsilon_{\mathrm{S}})}}-1\Bigg)-1\Bigg]^+,\\
  \hspace*{-8mm}  P_{\mathrm{UE}\rightarrow\mathrm{S},n}^{[i,k]}\hspace*{-3mm}&=& \hspace*{-3mm} \frac{1}{\gamma_{\mathrm{UE}\rightarrow\mathrm{S},n}^{[i,k]}} \Bigg[\frac{P_{\mathrm{S}\rightarrow\mathrm{B},n}^{[i,k]}\gamma_{\mathrm{S}\rightarrow\mathrm{B},n}^{[i]}}{2}\Bigg(
  \sqrt{1+\frac{4\gamma_{\mathrm{UE}\rightarrow\mathrm{S},n}^{[i,k]}  (1+w^{[k]}_{\mathrm{UL}})}{ \gamma_{\mathrm{S}\rightarrow\mathrm{B},n}^{[i]} P_{\mathrm{S}\rightarrow\mathrm{B},n}^{[i,k]}\ln(2)(\psi_k\hspace*{-0.5mm}+\hspace*{-0.5mm}\eta_{\mathrm{eff}}\varepsilon_{k})}}-1\Bigg)-1\Bigg]^+,
   \end{eqnarray}
   which are obtained by applying standard optimization technique \cite{JR:MIMO_HD_relay1}.   Besides, the subcarrier allocation policies for DL and UL are still given by
  \eqref{eqn:sub_selection} and \eqref{eqn:sub_selection2}, respectively,  except that we replace
$\mathrm{\overline{SINR}}^{[i,k]}_{\mathrm{DL}_n}, \mathrm{\overline{SINR}}^{[i,k]}_{\mathrm{UL}_n}$ by $\mathrm{{SINR}}^{[i,k]}_{\mathrm{DL}_n},\mathrm{{SINR}}^{[i,k]}_{\mathrm{UL}_n}$ in \eqref{eqn:SINR_1}, respectively. The optimization variables are updated repeatedly  until convergence or the maximum number of iterations is reached. In contrast to the asymptotically optimal algorithm in Section \ref{sect:optimal_solution}, which may not even achieve a locally optimal solution for finite SNRs,  the suboptimal iterative algorithm is  guaranteed to converge to a local optimum \cite{JR:AO} for arbitrary SNR values.
\vspace*{-3mm}
\section{Results and Discussion}\label{sect:result-discussion}\vspace*{-2mm}
In this section, we evaluate the system performance based on Monte Carlo simulations. We assume an indoor environment with  $K=4$ UEs and $M$ SUDACs and an outdoor BS. The distances between the BS and UEs and between each SUDAS and each UE are $100$ meters and $4$ meters, respectively.
 For the BS-to-SUDAS links, we adopt the Urban macro outdoor-to-indoor scenario of the Wireless World Initiative New Radio (WINNER+) channel model \cite{Spec:Winner+}. The center frequency  and the bandwidth of the licensed band are  $800$ MHz and $20$ MHz, respectively. There are $n_{\mathrm{F}}=1200$  subcarriers  with $15$ kHz  subcarrier bandwidth resulting in $18$ MHz signal bandwidth for data transmission\footnote{The proposed SUDAS can be easily extended to the case when carrier aggregation is implemented at the BS to create a large signal bandwidth ($\sim 100$ MHz) in the licensed band. }. Hence, the BS-to-SUDAS link configuration is in accordance with the system parameters adopted in the Long Term Evolution (LTE) standard \cite{Spec:LTE}.  As for the SUDAS-to-UE links, we adopt the IEEE $802.11$ad channel model \cite{Spec:60GHz} in the range of $60$ GHz and assume that  $M$ orthogonal sub-bands are available. The maximum transmit power of the SUDACs and UEs is set to $MP_{\max}=P_{\max}^{\mathrm{UL}}=P_{\max_k}=23$ dBm which is in accordance with the maximum power spectral density suggested by the Harmonized European Standard for the mmW frequency band, i.e.,  $13$ dBm-per-MHz, and the typical maximum transmit power budgets of UEs. For simplicity, we assume that $N_{\mathrm{S}}=\min\{N,M\}$ for
studying the system performance. We model the SUDAS-to-BS and UE-to-SUDAS links as the conjugate transpose of the BS-to-SUDAS and SUDAS-to-UE links, respectively. Also, the power amplifier efficiencies of all power amplifiers are set to $25\%$. The circuit power consumption for the BS and each BS antenna are given by $P_{\mathrm{C}_{\mathrm{B}}}=15$ Watt and $P_{\mathrm{Ant}_{\mathrm{B}}}=0.975$ Watt, respectively \cite{CN:static_power,CN:power_consumption_elements}. The circuit power consumption per SUDAC and  UE  are set to $P_{\mathrm{C}_{\mathrm{SUDAC}}}=0.1$ Watt \cite{CN:60GHZ_power} and $P_{\mathrm{C}_{\mathrm{UE}}} =1$ Watt, respectively. We assume that there is always one  delay sensitive UE requiring  $R^{\mathrm{DL}}_{\min_k}=20$ Mbit/s and $R^{\mathrm{UL}}_{\min_k}=20$ Mbit/s in DL and UL, respectively. Also, $N_{\mathrm{S}}$ is chosen as $N_{\mathrm{S}}=
 \min\{\Rank(\mathbf{H}^{[i,k]}_{\mathrm{S}\rightarrow\mathrm{UE}}),\Rank(\mathbf{H}^{[i]}_{\mathrm{B}\rightarrow\mathrm{S}})\}$. All results
were averaged over $10000$ different multipath fading channel realizations.

\vspace*{-6mm}
\subsection{Convergence of the Proposed Iterative Algorithm}\vspace*{-2mm}
Figure \ref{fig:convergence} illustrates the convergence of
the proposed optimal and suboptimal algorithms  for $N=8$ antennas at the BS and $M=8$ SUDACs, and different maximum transmit powers at the BS, $P_{\mathrm{T}}$. We compare the system performance of the proposed algorithms with a performance upper bound  which is obtained by computing the optimal objective value in (\ref{eqn:cross-layer-formulation-transformed}) for noise-free reception at the UEs and the BS. The performance gap between the asymptotically optimal performance and the upper bound constitutes an upper bound on the performance loss due to the high SINR approximation adopted in (\ref{eqn:SINR_1}). The number of iterations is defined as the aggregate number of iterations required by Algorithms $1$ and $2$.  It can be observed that
the proposed asymptotically optimal algorithm approaches $99\%$ of the upper bound
value after $20$ iterations which  confirms the practicality of the proposed iterative algorithm. Besides, the suboptimal resource allocation algorithm, achieves $90\%$ of the upper bound
value in the low transmit power regime, i.e., $P_{\mathrm{T}}=19$ dBm, and virtually the same energy efficiency as the upper bound performance in the high transmit power regime, i.e., $P_{\mathrm{T}}=46$ dBm.  In the following case studies, the number of iterations is set to $30$ in order to illustrate the performance of the proposed
algorithms.

\begin{figure}[t]\vspace*{-6mm}
 \centering
 \begin{minipage}[b]{0.45\linewidth} \hspace*{-1cm}
\includegraphics[width=3.6 in]{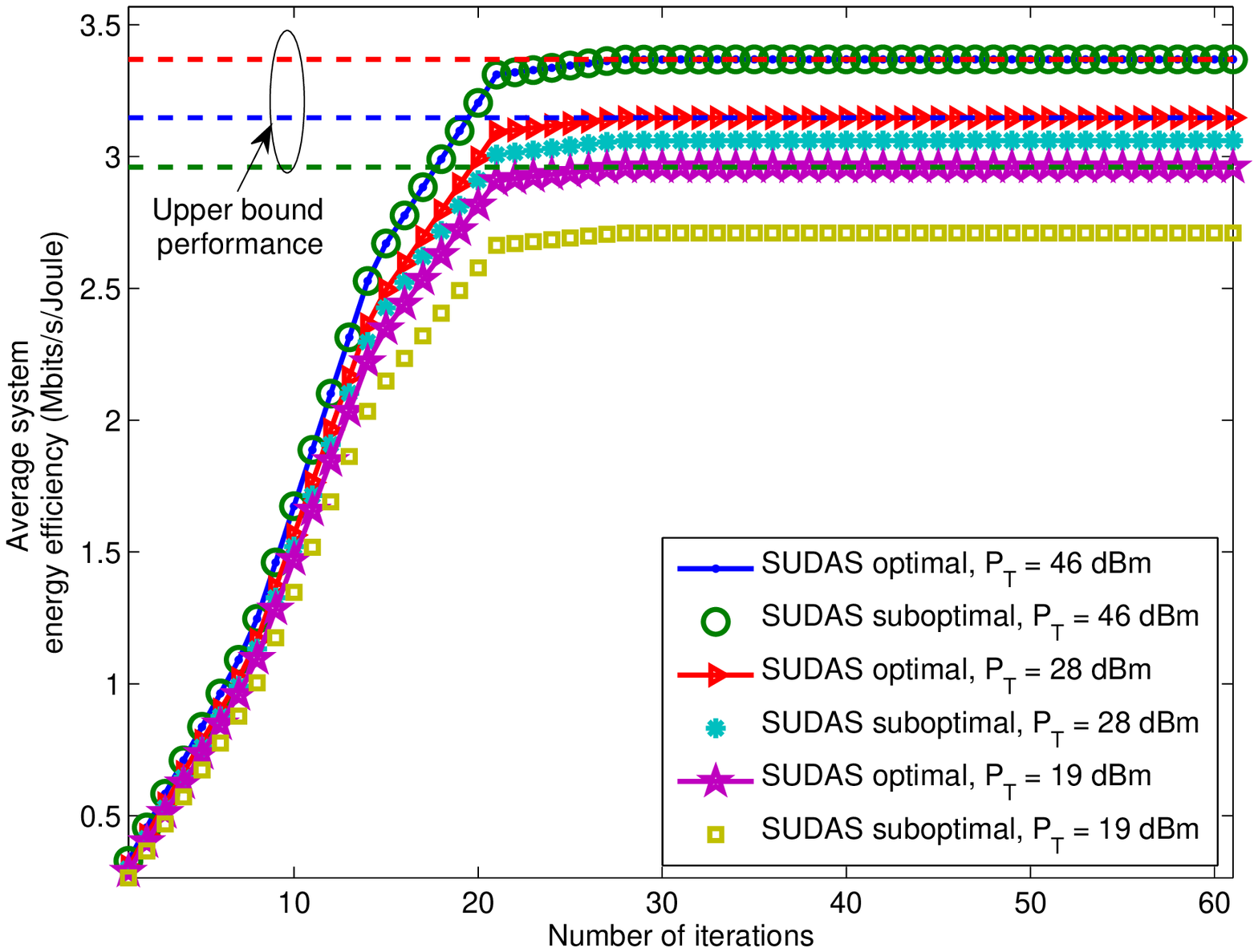}\vspace*{-8mm}
\caption{Average energy efficiency (Mbits/Joule) versus the number of iterations for different maximum transmit power budgets at the BS.  }\label{fig:convergence}
 \end{minipage}\hspace*{1.1cm}
 \begin{minipage}[b]{0.45\linewidth} \hspace*{-1cm}
\includegraphics[width=3.6 in]{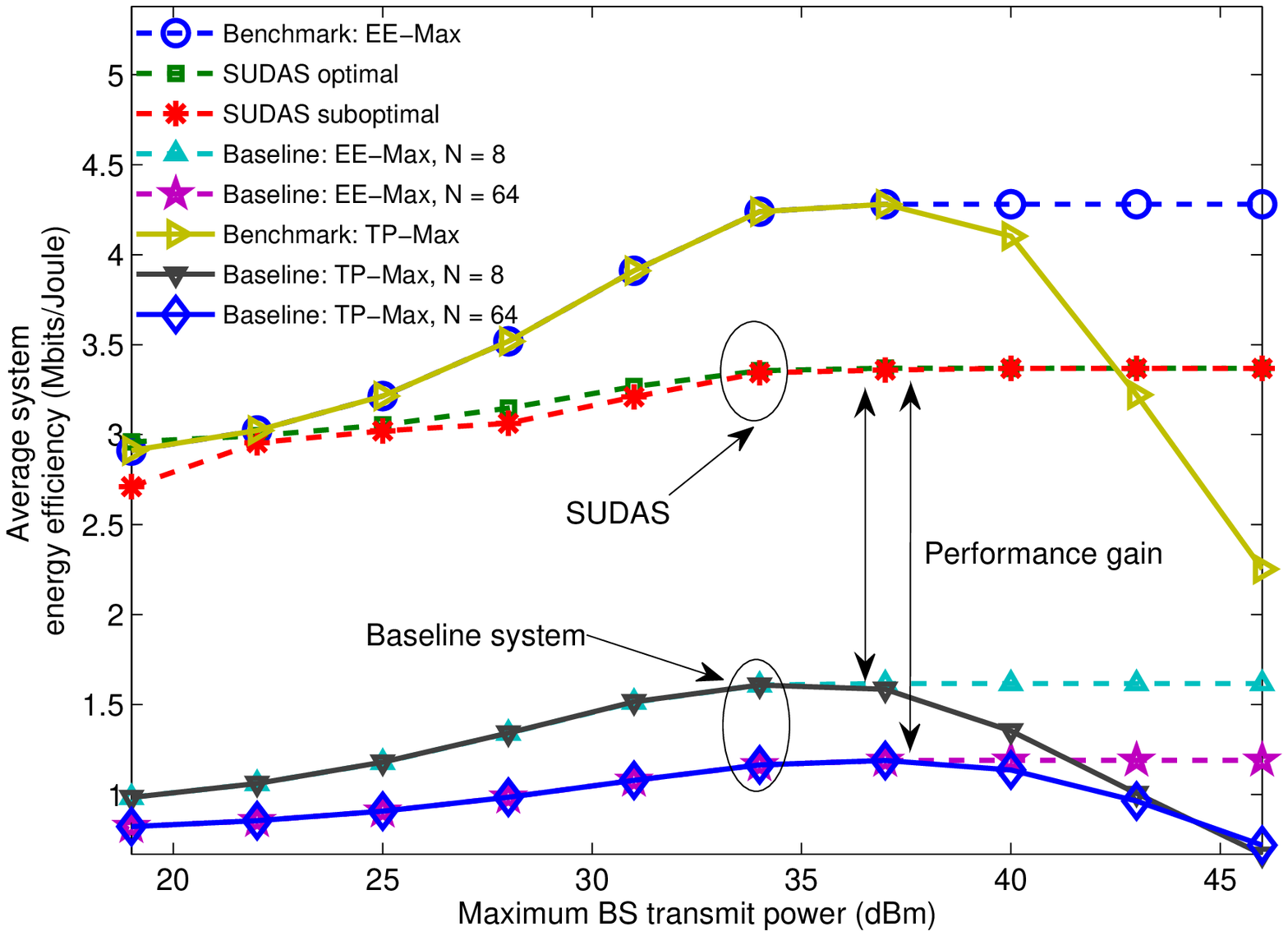}\vspace*{-8mm}
\caption{Average energy efficiency (Mbits/Joule) versus the maximum transmit power at the BS (dBm) for different communication systems. The  double-sided arrows
indicate  the performance gain achieved by the proposed SUDAS. } \label{fig:EE_PT}
 \end{minipage}\vspace*{-10mm}
\end{figure}
\vspace*{-6mm}
\subsection{Average System Energy Efficiency versus Maximum Transmit Power}
Figure \ref{fig:EE_PT} illustrates the average system
energy efficiency versus the maximum DL transmit power at the BS for $M=8$ SUDACs for different systems and $N=8$ BS antennas. It can be observed that the average system energy efficiency
of the two proposed resource allocation algorithms for SUDAS is a monotonically non-decreasing function of $P_{\mathrm{T}}$. In particular, starting
from a small value of $P_{\mathrm{T}}$, the energy efficiency  increases slowly with increasing $P_{\mathrm{T}}$ and
then saturates when $P_{\mathrm{T}}>37$ dBm. This is due to the fact that the two proposed algorithms strike a
balance between system energy efficiency and power consumption. In fact, once
the maximum energy efficiency of the SUDAS is achieved, even if there is more power available for transmission, the BS will not consume extra DL transmit power for improving the data rate, cf. \eqref{eqn:P1}.  This is because a further increase in the BS transmit power
would only result in a degradation of the energy efficiency.  Moreover, we compare the energy efficiency of the proposed SUDAS with a benchmark
MIMO system and a baseline system.  We focus on two system design objectives for the reference systems, namely, system throughput maximization (TP-Max) and energy efficiency maximization (EE-Max).
For the benchmark MIMO system, we assume that each UE is equipped with $N$ receive antennas but the SUDAS is not used and optimal resource allocation is performed\footnote{The optimal resource allocation for the benchmark system can be obtained by following a similar method as the one  proposed in this paper applying also fractional programming and majorization theory. }. Besides, we assume that the circuit power consumption at the UE does not scale with the number of antennas and only the licensed band is used for the benchmark system.  In other words, the average system energy efficiency of the benchmark system serves as a performance upper bound for the proposed SUDAS.  For the baseline system, we assume that the BS and the single-antenna UEs perform optimal resource allocation and utilize only the licensed frequency band, i.e., the SUDAS is not used. As can be observed from Figure \ref{fig:EE_PT},  for high BS transmit power budgets, the SUDAS achieves more than $80\%$ of the benchmark MIMO system performance  even though the UEs are only equipped with single antennas. Also, the SUDAS provides a huge system performance gain compared to the baseline system which does not employ SUDAS since the proposed  SUDAS allows the single-antenna UEs to exploit spatial and frequency multiplexing gains. On the other hand,  increasing the number of BS antennas  dramatically in the baseline system from $N=8$ to  $N=64$, i.e., to a large-scale antenna system, does not necessarily improve the system energy efficiency. In fact, in the baseline system, the higher power consumption, which increases linearly with the number of BS antennas, outweighs the system throughput gain, which only scales logarithmically with  the additional BS antennas.

Figure \ref{fig:time_allocation} depicts the average time allocation for DL and UL transmission. It can be observed that the optimal time allocation depends on the  transmit power budget of the systems.   In particular, when the power budget of the BS for DL communication is small compared to the total transmit power budget for UL communication, e.g. $P_{\mathrm{T}}\le 28$ dBm, the  period of time allocated for DL transmission is shorter than that allocated for UL transmission. Because of the limited power budget and the circuit power consumption, it is preferable for the BS to transmit a sufficiently large power over a short period of time rather than a small power over a longer time  to maximize the system energy efficiency and to fulfill the data rate requirement of the DL delay sensitive UEs. On the contrary, when the power budget of the BS is  large compared to that of the UEs,  the system allocates more time resources to the DL compared to the UL, since the  BS can now transmit a large enough  power to compensate the circuit power consumption for a longer time span to maximize the system energy efficiency. \vspace*{-5mm}
\subsection{Average System Throughput versus Maximum Transmit Power}\vspace*{-2mm}
\begin{figure}[t]\vspace*{-6mm}
 \centering
 \begin{minipage}[b]{0.45\linewidth} \hspace*{-1cm}
\includegraphics[width=3.5 in]{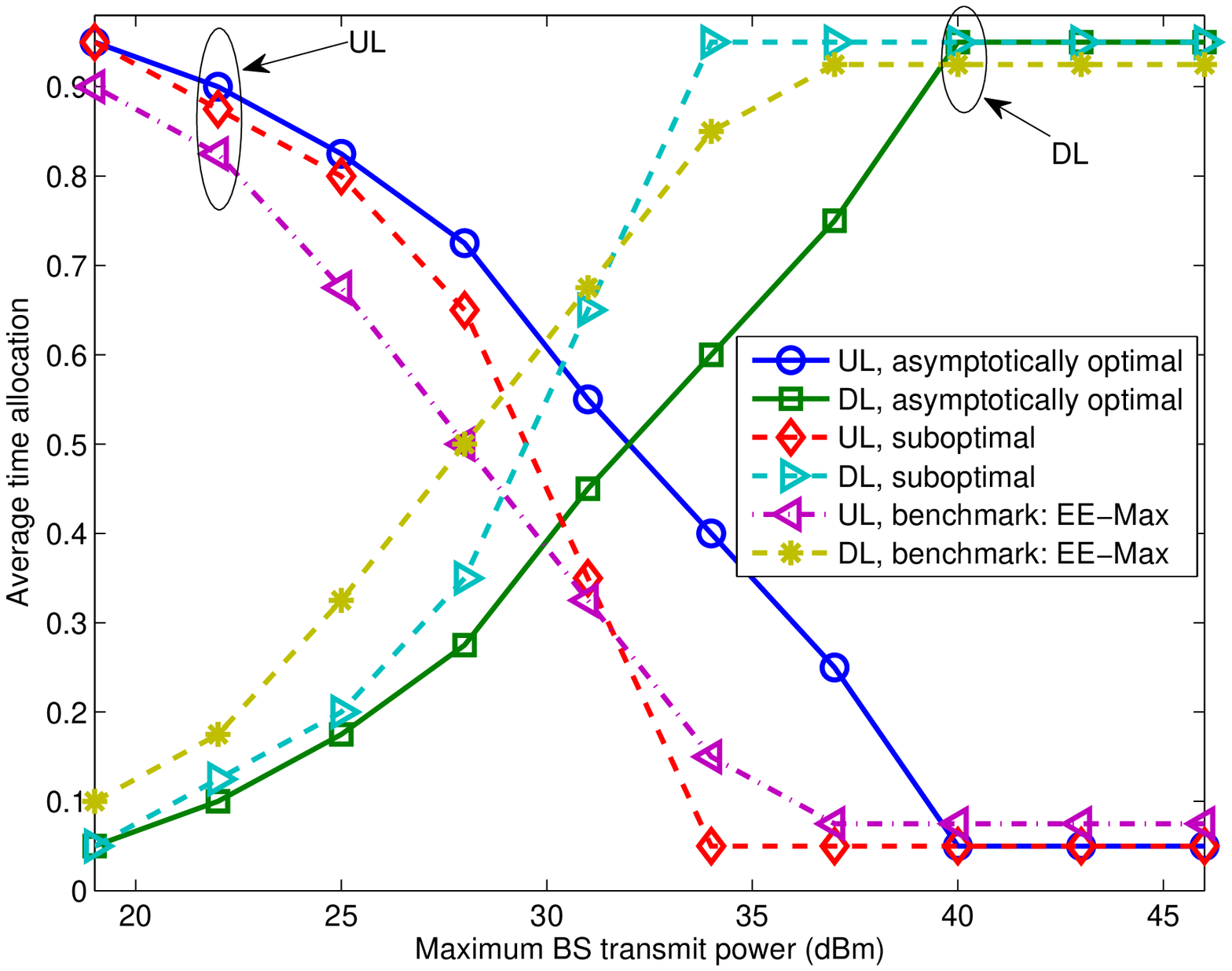}\vspace*{-8mm}
\caption{Average DL and UL transmission durations versus the maximum transmit power at the BS (dBm).  }\label{fig:time_allocation}
 \end{minipage}\hspace*{1.1cm}
 \begin{minipage}[b]{0.45\linewidth} \hspace*{-1cm}
\includegraphics[width=3.5 in]{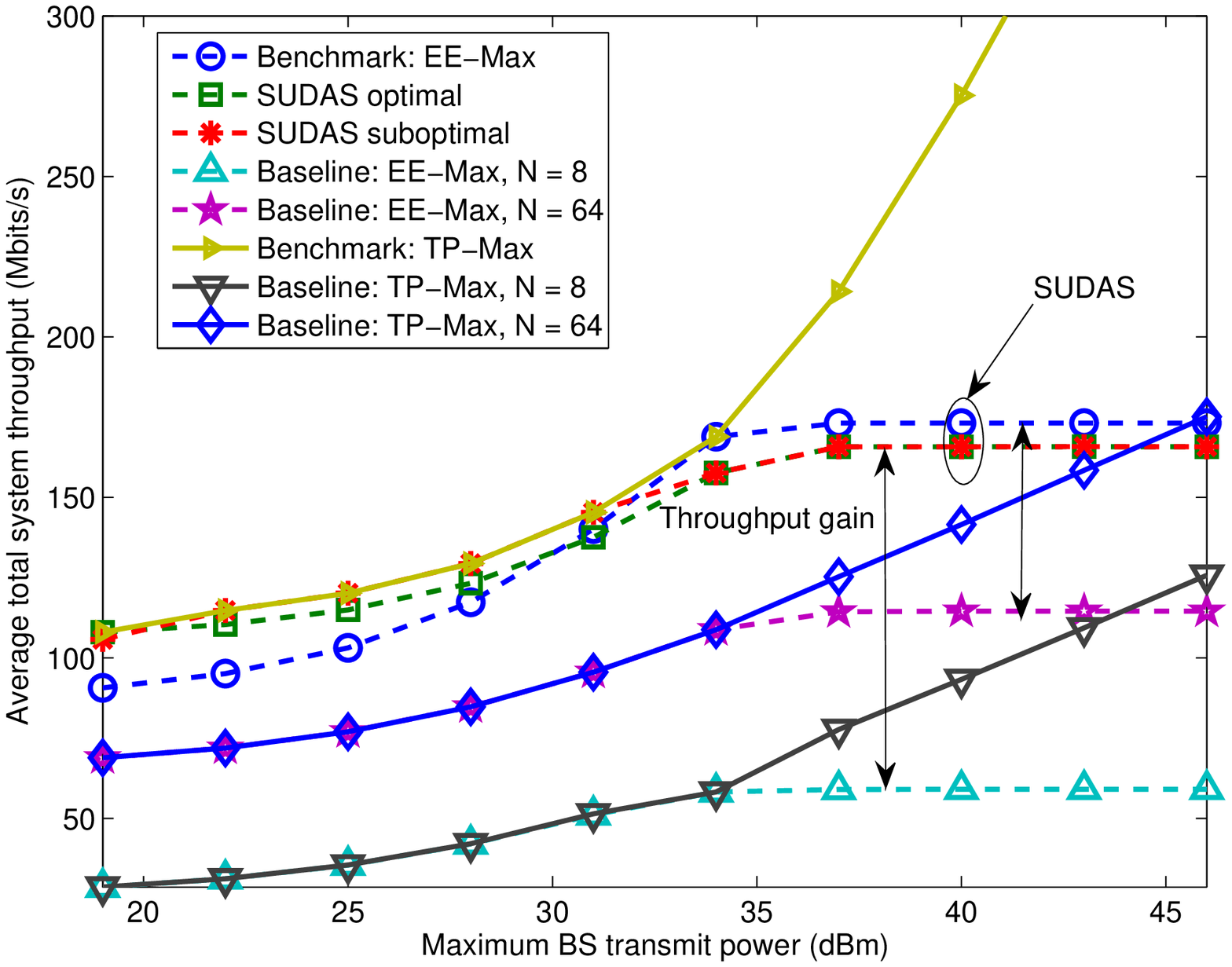}\vspace*{-8mm}
\caption{Average system throughput (Mbits/s) versus the maximum transmit power at the BS (dBm). } \label{fig:CAP_TP}
 \end{minipage}\vspace*{-8mm}
\end{figure}
Figure \ref{fig:CAP_TP}   illustrates the average system
throughput versus the maximum transmit power at the BS for $N=8$ BS antennas,  $K=4$ UEs, and $M=8$.
We compare the two proposed algorithms with
the two aforementioned reference systems. The proposed SUDAS performs closely to the benchmark scheme   in the low DL transmit power budget regime, e.g. $P_{\mathrm{T}}\leq 31$ dBm. This is due to the fact that the proposed SUDAS allows the single-antenna UEs to transmit or receive multiple parallel data streams by utilizing the large bandwidth available in the unlicensed band. Besides, for all considered systems,  the average
system throughput of all the systems increases monotonically with the maximum DL transmit power $P_{\mathrm{T}}$. Yet, for the systems aiming at maximizing energy efficiency, the corresponding system throughput saturates in the high transmit power
allowance regime, i.e.,  $P_{\mathrm{T}}\geq37$ dBm, in the considered system setting.
  In fact, the energy-efficient SUDAS does not further increase the DL transmit power  since the system throughput gain due to a higher transmit power cannot compensate for
the increased transmit power, i.e., the energy efficiency would decrease. As for the benchmark and baseline systems aiming at  system throughput maximization, the average system throughput increases with the DL transmit power without saturation. For system throughput maximization, the BS always utilizes the entire available DL power budget. Yet, the increased system throughput comes at the expense of a severely degraded system energy efficiency, cf. Figure \ref{fig:EE_PT}.

\vspace*{-4mm}
\subsection{Average System Performance versus Number of SUDACs }\vspace*{-2mm}
Figures \ref{fig:EE_SUDAC} and \ref{fig:CAP_SUDAC} illustrate the average energy efficiency and
throughput versus the number of SUDACs for $N=8$ BS antennas, $P_{\mathrm{T}}=37$ dBm, and different systems. It can be observed that both the system energy efficiency and the system throughput of the proposed SUDAS grows with the number of SUDACs, despite the increased power consumption associated with each additional SUDAC.  For $N\ge M$,   for DL transmission, additional SUDACs  facilitate a more efficient conversion of the spatial multiplexing gain in the  licensed band to a frequency multiplexing gain in the unlicensed band which leads to a   significant data rate improvement.  Similarly, for UL transmission, the SUDACs help in converting the frequency multiplexing gain in the unlicensed band to a spatial multiplexing gain in the licensed band. For $M>N$,  increasing the number of SUDACs in the system
leads to more spatial diversity which also improves energy efficiency and system
throughput. Besides, a substantial performance gain can be achieved by the SUDAS  compared to the baseline system for an increasing number of available SUDACs.

\begin{figure}[t]\vspace*{-6mm}
 \centering
 \begin{minipage}[b]{0.45\linewidth} \hspace*{-1cm}
\includegraphics[width=3.5 in]{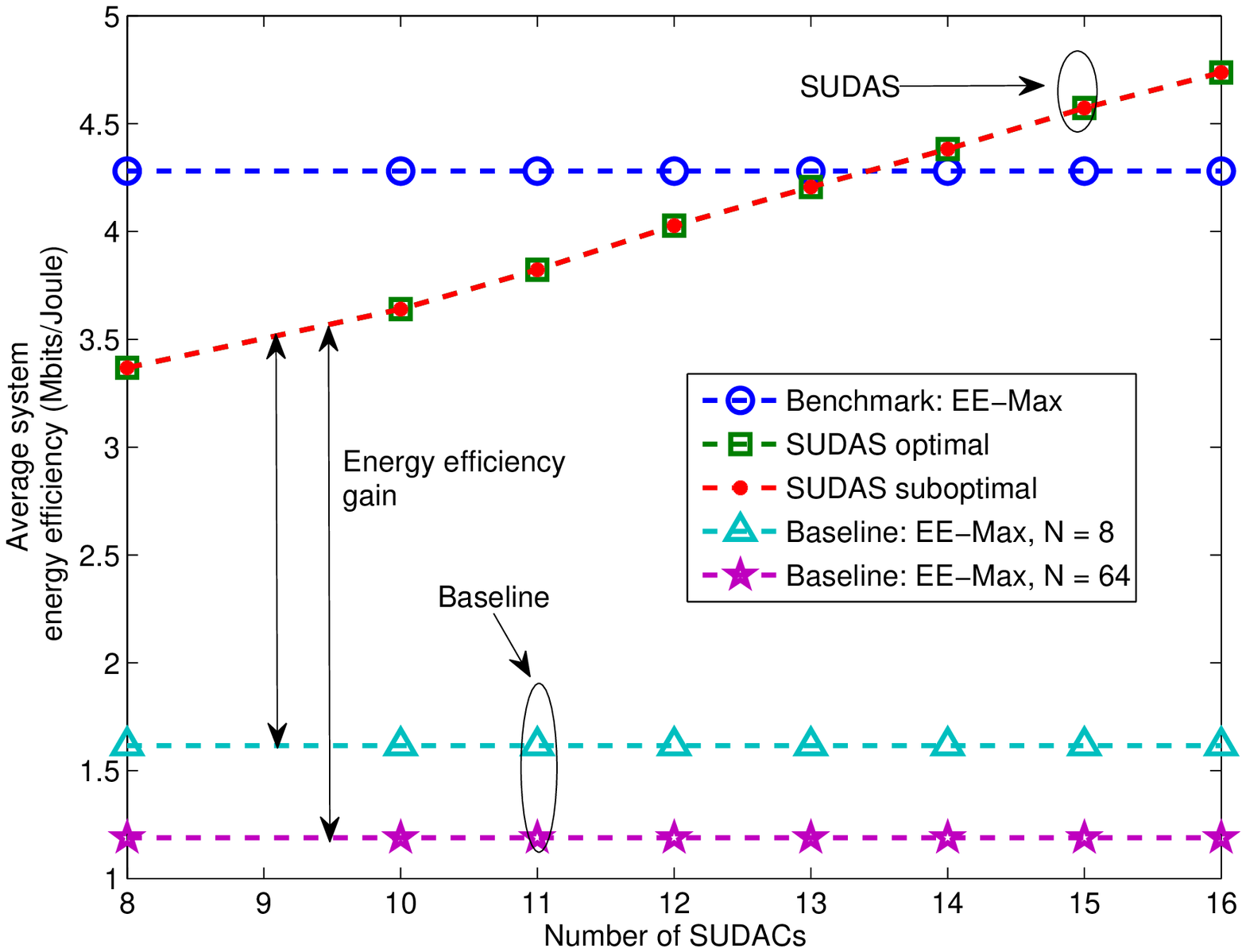}\vspace*{-8mm}
\caption{Average energy efficiency (Mbits/Joule) versus the number of SUDACs.  }\label{fig:EE_SUDAC}
 \end{minipage}\hspace*{1.1cm}
 \begin{minipage}[b]{0.45\linewidth} \hspace*{-1cm}
\includegraphics[width=3.5 in]{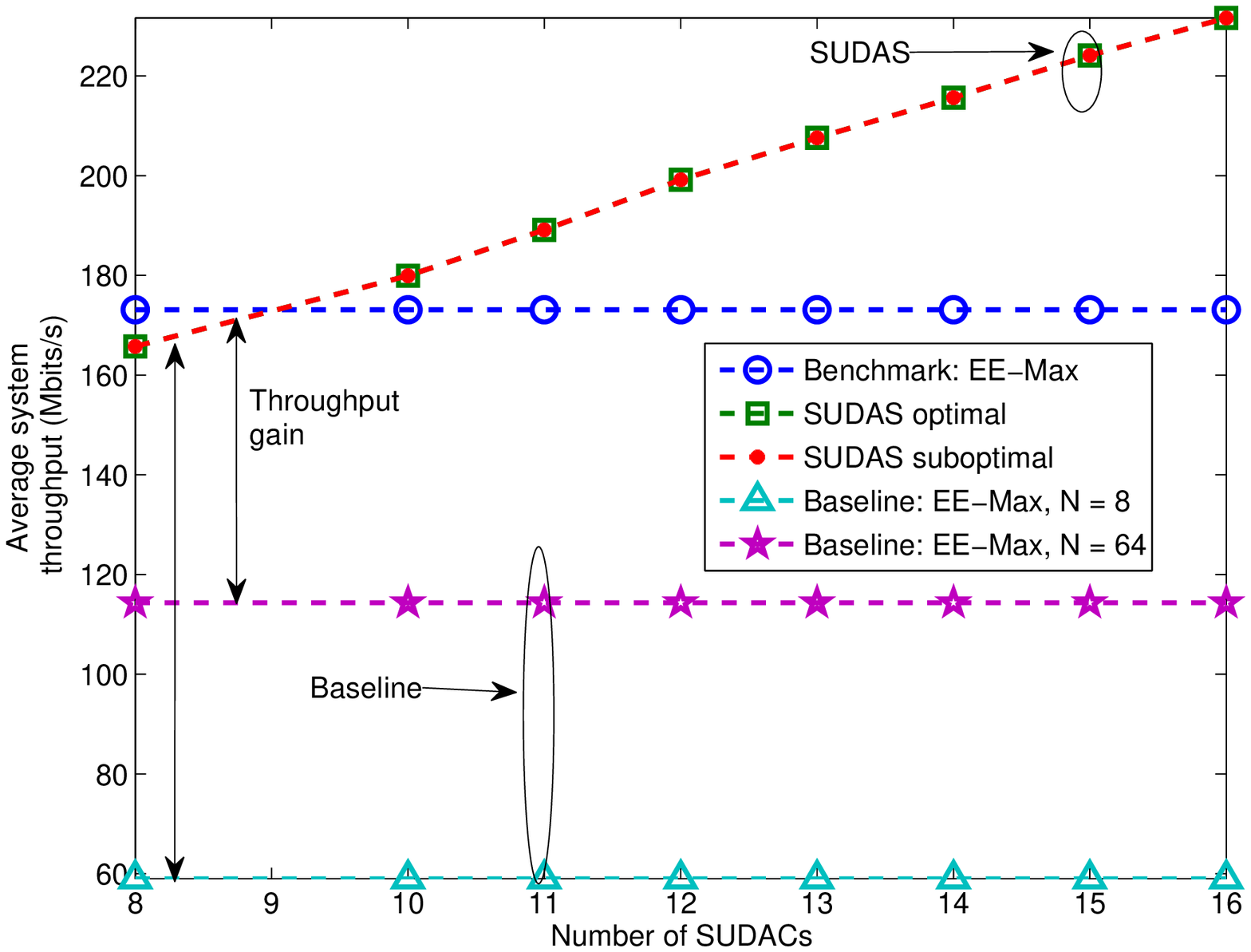}\vspace*{-8mm}
\caption{Average system throughput (Mbits/s) versus  the number of SUDACs. } \label{fig:CAP_SUDAC}
 \end{minipage}\vspace*{-8mm}
\end{figure}
\vspace*{-5mm}
\section{Conclusions}\label{sect:conclusion}
In this paper, we studied the resource allocation algorithm design for SUDAS assisted outdoor-to-indoor communication.
Specifically, the proposed SUDAS  simultaneously utilizes  licensed and unlicensed frequency bands   to facilitate spatial and frequency multiplexing gains for single-antenna UEs in DL and UL, respectively.
 The resource allocation algorithm design was formulated as a non-convex matrix optimization problem. In order to obtain a tractable solution,
we  revealed the structure of the optimal precoding  matrices such that the  problem could be
transformed into a scalar optimization problem. Based on this result, we proposed an asymptotically globally optimal and a suboptimal
iterative  resource allocation algorithm to solve the problem by
alternating optimization. Our simulation results showed that the proposed
SUDAS assisted transmission provides substantial energy efficiency and throughput gains compared to baseline systems which utilize only  the licensed  frequency spectrum for communication.\vspace*{-5mm}
\section*{Appendix-Proof of Theorem \ref{Thm:Diagonalization_optimal}}\vspace*{-2mm}
Due to the page limitation, we provide only a sketch of the proof  which follows a similar approach as in \cite{JR:Yue_Rong_diagonalization,JR:Kwan_FD} and uses majorization theory. We show that the optimal
precoding and post-processing matrices jointly diagonalize the DL and UL end-to-end channel matrices on
each subcarrier for the maximization of the transformed objective function in subtractive form in \eqref{eqn:inner_loop}.
First, we consider the objective function in subtractive form  for  UE $k$ on a per-subcarrier
basis with respect to the optimization variables. In particular, the per-subcarrier objective function for UE $k$ consists of two parts,\vspace*{-2mm}
\begin{eqnarray}
f_1({\cal P},{\cal
S})&=& -s^{[i,k]}_{\mathrm{DL}}\log_2\Big(\det[\mathbf{E}^{[i,k]}_{\mathrm{DL}}]\Big)-s^{[i,k]}_{\mathrm{UL}}\log_2
\Big(\det[\mathbf{E}^{[i,k]}_{\mathrm{UL}}]\Big)\\
f_2({\cal P},{\cal
S})&=&s^{[i,k]}_{\mathrm{DL}}\varepsilon_{\mathrm{B}}
\Tr\Big(\mathbf{P}^{[i,k]}_{\mathrm{DL}}(\mathbf{P}^{[i,k]}_{\mathrm{DL}})^H
\Big) +s^{[i,k]}_{\mathrm{DL}}\varepsilon_{\mathrm{S}}
\Tr\Big(\mathbf{G}^{[i,k]}_{\mathrm{DL}}\Big)+s^{[i,k]}_{\mathrm{UL}}
\varepsilon_{\mathrm{S}}\Tr\Big(\mathbf{G}^{[i,k]}_{\mathrm{UL}}\Big)\notag\\
&+&\varepsilon_k s^{[i,k]}_{\mathrm{UL}}\Tr\Big(\mathbf{P}^{[i,k]}_{\mathrm{UL}}(\mathbf{P}^{[i,k]}_{\mathrm{UL}})^H\Big),
\end{eqnarray}
such that the maximization of the per subcarrier objective function can be expressed as\vspace*{-3mm}
\begin{eqnarray}
\underset{{\cal P},{\cal
S}}{\mino}-f_1({\cal P},{\cal
S})+\eta_{\mathrm{eff}} f_2({\cal P},{\cal
S}).
\end{eqnarray}\vspace*{-1mm}
Besides,  the  determinant of the  MSE matrix on subcarrier $i$ for UE $k$ can be written as
\begin{eqnarray}
\det\Big(\mathbf{E}^{[i,k]}_{\mathrm{DL}}\Big)&=&\prod_{j=1}^{N_{\mathrm{S}}}\Big[ \mathbf{E}^{[i,k]}_{\mathrm{DL}}\Big]_{j,j},
\end{eqnarray}
where  $[\mathbf{X}]_{a,b}$ extracts the $(a,b)$-th element of matrix $\mathbf{X}$.
  $f_1({\cal P},{\cal
S})$ is a Schur-concave function with respect to the optimal precoding matrices
 \cite{JR:Yue_Rong_diagonalization} for a given subcarrier allocation policy $\cal S$.  Thus,   $-f_1({\cal P},{\cal
S})$ is minimized when the MSE matrix $\mathbf{E}^{[i,k]}_{\mathrm{DL}}$ is a diagonal matrix.  Furthermore,  the trace operator in $f_2({\cal P},{\cal
S})$ for the computation of  the total power consumption is also a Schur-concave function with respect to the optimal precoding  matrices. Thus, the optimal  precoding matrices for the minimization of function $f_2({\cal P},{\cal
S})$ should diagonalize the input matrix of the trace function, cf. \cite[Chapter 9.B.1]{book:majorization} and \cite[Chapter 9.H.1.h]{book:majorization}. Similarly, the power consumption functions on the left hand side of  constraints C1--C4 in \eqref{eqn:inner_loop} are also Schur-concave functions and are minimized if the input matrices of the trace functions are diagonal.
    Besides, the non-negative weighted sum of Schur-concave functions over the subcarrier and UE indices preserves Schur-concavity. In other words,  the optimal precoding matrices should jointly diagonalize the     subtractive form    objective function in \eqref{eqn:inner_loop} and simultaneously diagonalize matrices $\mathbf{P}^{[i,k]}_{\mathrm{UL}}(\mathbf{P}^{[i,k]}_{\mathrm{UL}})^H,\mathbf{P}^{[i,k]}_{\mathrm{DL}}(\mathbf{P}^{[i,k]}_{\mathrm{DL}})^H, \mathbf{G}^{[i,k]}_{\mathrm{UL}}$, and $\mathbf{G}^{[i,k]}_{\mathrm{DL}}$. This observation establishes a necessary condition for the structure of the optimal precoding  matrices. Finally, by performing SVD on the channel matrices and after some mathematical manipulations,  it can be verified that the  matrices in \eqref{eqn:matrix_P} and \eqref{eqn:matrix_F} satisfy the optimality condition.
%

%
%

\end{document}